\newcommand{\keys}{S}
\newcommand{\keysval}{s}
\newcommand{\queries}{\Sigma}
\newcommand{\queriesval}{\sigma}
\newcommand{\obs}{O}
\newcommand{\obsval}{o}
\newcommand{\view}{\mathit{view}}
\newcommand{\up}{\mathit{upd}}
\newcommand{\partition}{R}
\newcommand{\block}{r}
\newcommand{\randsecret}{X}
\newcommand{\randguess}{\hat{\randsecret}}
\newcommand{\randkey}{Y}
\newcommand{\randblock}{Z}
\newcommand{\memblock}{\mathcal{B}}
\newcommand{\cache}{C}
\newcommand{\ages}{\mathcal{A}}
\newcommand{\assoc}{A}
\newcommand{\maxleak}{r_{\max}}
\newcommand{\hit}{\mathsf{H}}
\newcommand{\miss}{\mathsf{M}}
\newcommand{\FIFO}{\text{FIFO}}
\newcommand{\PLRU}{\text{PLRU}}
\newcommand{\LRU}{\text{LRU}}
\newcommand{\att}{\text{att}}
\newcommand{\footprint}{\mathit{fp}}
\newcommand{\sizeof}[1]{\left| #1 \right|}
\newcommand\range[1]{\mathop{ran}(#1)}
\newcommand\ABS{\mathit{Abs}}
\newcommand\emptyc{e}
\newcommand\filledc{f}
\newcommand{\customlabel}[2]{%
   \protected@write \@auxout {}{\string \newlabel {#1}{{#2}{\thepage}{#2}{#1}{}} }%
   \hypertarget{#1}
}
\pgfplotsset{width=7cm,compat=newest,xlabel style={font=\small}}
\newcommand{\AddCurve}[4]{
       \addplot[mark=#3, color=#2,style=#4] plot file {#1};
}
\definecolor{color1}{RGB}{189, 0, 38}           
\definecolor{color2}{RGB}{116, 196, 118}        
\definecolor{color3}{RGB}{252, 146, 114}        
\newif\ifshowproof
\begin{document}
\pagestyle{headings} 

\title{Security Analysis of Cache Replacement Policies}

\author{Pablo Ca{\~n}ones\inst{1} \and Boris K{\"o}pf\inst{1} \and
    Jan Reineke\inst{2}}

\institute{
IMDEA Software Institute, Madrid, Spain \\
                \email{\{pablo.canones,boris.koepf\}@imdea.org}\\[2ex]
\and
Saarland University, Saarland Informatics Campus, Saarbr{\"u}cken, Germany\\
\email{reineke@cs.uni-saarland.de}
}
    
\maketitle
\begin{abstract}
  Modern computer architectures share physical resources between
  different programs in order to increase area-, energy-, and
  cost-efficiency.  Unfortunately, sharing often gives rise to side
  channels that can be exploited for extracting or transmitting
  sensitive information. We currently lack techniques for systematic
  reasoning about this interplay between security and efficiency. In
  particular, there is no established way for quantifying security
  properties of shared caches.

  In this paper, we propose a novel model that enables us to
  characterize important security properties of caches. Our model
  encompasses two aspects: (1) The amount of information that can be
  {\em absorbed} by a cache, and (2) the amount of information that
  can effectively be {\em extracted} from the cache by an adversary.
  We use our model to compute both quantities for common cache
  replacement policies (FIFO, LRU, and PLRU) and to compare their
  isolation properties.  We further show how our model for information
  extraction leads to an algorithm that can be used to improve the
  bounds delivered by the CacheAudit static analyzer.
\end{abstract}

\section{Introduction}
Modern computer architectures share physical resources across
different programs in order to increase area-, energy-, and
cost-efficiency. Examples of commonly shared resources are caches,
branch prediction units, DRAM, and disks.

Unfortunately, sharing poses a threat to security: even if programs
are completely isolated on a logical level, sharing a physical
resource usually means that one program's resource usage pattern can
be observed by the other. This constitutes a channel that can be
exploited for extracting or transmitting sensitive information. While
this kind of vulnerability has been known for
decades~\cite{lampson1973note}, its severity has become painfully
apparent with a stream of highly effective side-channel attacks.  One
shared resource that has been the objective of a large number of
attacks are CPU caches,
e.g.~\cite{osvikshamir06cache,Bernstein05cache-timingattacks,Aciicmez07Simp,Aciicmez07Diff,GullaschBK11,YaromF14,LiuYGHL15}.

From a security point of view it would be ideal to completely
eliminate side channels through the cache by design, as
in~\cite{TiwariOLVLHKCS11,zwsm15}, or to flush the cache between
accesses of two different parties. Unfortunately, such conservative
approaches also partially void the performance benefits of sharing. In
many practical scenarios, designers will opt for less conservative
solutions that offer ``sufficient'' degrees of security together with
high performance. However, while there is a large body of work on
evaluating the impact of different cache designs on performance, there
are no established metrics for evaluating their security, which
prevents principled decision-making in that design space.

\subsubsection{Approach.}
In this paper, we address this problem by introducing a novel approach to quantify the security of caches, in particular: their replacement
policies. Our approach aims to answer the following questions, which
capture two natural aspects of isolation between programs that share
the cache:

\begin{asparaenum}
\item[{\bf Q1}] {\em How much information about a computation is absorbed by the cache?}\\ There are two challenges
  involved with this question. The first is to identify a meaningful
  measure for the information contained in a given cache state. 
  The second is to characterize the set of possible computations, which may induce different cache states.
  To make assertions about the security of the
  cache architecture (rather than about the security of a specific
  program running on top of a cache architecture) such a characterization needs to encompass a sufficiently
  general class of programs.

\item[{\bf Q2}] {\em How much information can an adversary extract
    from the cache state?}\\ The challenge for answering this question
  is that an adversary can only learn about the cache state by
  probing, that is, by performing memory accesses and measuring their
  latency. However, probing also modifies the cache state and thus can
  reduce its information content. With the exception of one approach
  that encompasses secrets that change over time~\cite{mardziel14},
  existing models of quantitative information flow do not account for
  this scenario because they either consider only single
  probes~\cite{smith09} or assume the secret remains unchanged by the
  probing~\cite{koepfbasin07,BorealeP12,alvim2012measuring}.
\label{q2}
\end{asparaenum}
\begin{asparaenum}
\item[{\bf A1}] For answering Q1, we characterize the absorbed
  information as the number of reachable cache states, which
  essentially captures the information that programs leak {\em into}
  the cache. For a single program, this amount can be bounded
  using existing static analysis
  tools~\cite{doychev2015cacheaudit}. For abstracting from a specific
  program, we draw inspiration from the working set
  model~\cite{DenningWSM} and characterize programs in terms of their
  footprint, i.e., the number of memory blocks they use. We then show
  how (and under which assumptions) the footprint alone can be used to
  characterize the absorption of a given replacement policy, leading
  to a program-independent measure.
\item[{\bf A2}] For answering Q2, we put forward a novel model to
  quantify the ``extractable'' information about the cache state. We
  consider an adversary that adaptively provides inputs and observes
  the outputs. The key difference to existing models of adaptive
  attacks~\cite{koepfbasin07,BorealeP12} is that our model is based on
  a Mealy machine in which each input triggers a state transition,
  which may erase information about its origin. As in existing models,
  we first characterize the revealed information in terms of a
  partition of the set of secrets (here: initial states of the
  machine). We then evaluate this partition with established measures
  of leakage to quantify the corresponding amount of information. By
  considering the maximum leakage w.r.t. all possible inputs to the
  Mealy machine, we obtain an upper bound on the information that any
  adaptive adversary can extract.  We present an algorithm that
  computes such bounds for given Mealy machines.
\end{asparaenum}

\subsubsection{Results.}
We put our models and algorithms to work for the quantification of
absorption and extraction properties of common cache replacement
policies, namely FIFO, LRU, and PLRU. We highlight the following
results; see the paper for more details.
\begin{compactitem}
\item We show that the relative security ranking of cache replacement
  policies varies widely depending on the memory demand of the
  program. For example, FIFO can provide the best security when memory
  demand is low, whereas LRU generally provides the best security. Our
  results show that PLRU generally offers worse security than the other
  replacement policies.
\item We show that our algorithm for information extraction can be
  used for improving the cache-state counting of the CacheAudit static
  analyzer~\cite{doychev2015cacheaudit}. Our experimental results show
  that this significantly improves the bounds delivered by CacheAudit,
  leading to gains of up to 50 bits for AES 256.
\end{compactitem}

\subsubsection{Contribution.}
In summary, our conceptual contribution is to propose novel measures
for quantifying isolation properties of shared caches.  Our practical
contribution is to perform the first security analysis of common cache
replacement policies.

\section{The Model}

\subsection{Caches as Mealy Machines}
\label{sec:replacement}
Caches are fast but small memories that store a subset of the main memory's contents to bridge the latency gap between the CPU and the main memory.  To profit from spatial locality and to reduce management overhead, main memory is logically partitioned into a set $\memblock$ of memory blocks.  Each block is cached as a whole in a cache line of the same size.  When accessing a memory block, the cache logic has to determine whether the block is stored in the cache (``cache hit'') or not (``cache miss'').

In this paper, we model caches as Mealy machines, that is, finite automata that map sequences of accessed memory blocks to sequences of hits and misses. We begin by recalling the definition of a Mealy machine before we specialize it to the case of caches.
\begin{definition}
  A (deterministic) {\em Mealy machine} $M$ is a five-tuple consisting of 
\begin{compactitem}
\item $\keys$: a finite set of {\em states}, 
\item $\queries$: a finite set of {\em inputs}, 
\item $\obs$: a finite set of {\em outputs} (or {\em observations}), 
\item $\up\colon\keys\times\queries\rightarrow \keys$: a {\em transition function}, and 
\item $\view\colon\keys\times\queries \rightarrow \obs$: an {\em observation function}
\end{compactitem}
\end{definition}
For casting
caches as Mealy machines, we use memory blocks as inputs,
i.e. $\queries=\memblock$, and cache hits ($\hit$) and misses
($\miss$) as observations, i.e., $\obs=\{\hit,\miss\}$.  For defining
the set of states $\keys$, recall that caches are commonly partitioned
into independent equally-sized \emph{cache sets} whose size $\assoc$
is called the {\em associativity} of the cache. For each block there
is a single cache set that stores it.

For simplicity of presentation we focus on caches with a single set. Since cache sets behave independently from each other, the technique is generalizable to several sets by focusing each time on the blocks stored in a particular set. We model a cache set as a function that assigns an age in $\ages := \{ 0, \ldots , \assoc-1, \assoc \}$ to each memory block.
$$\keys= \{ c \in \memblock \rightarrow \ages \mid~ 
     \forall b_1,b_2 \in \memblock: b_1 \neq b_2~\Rightarrow \hfill c(b_1) \neq c(b_2) \vee c(b_1) = c(b_2) = \assoc)\}\ .
$$
Here, the youngest block has age $0$ and the oldest cached block has age $\assoc-1$. Age $\assoc$ means that a block is not cached; it is the only age that can be shared by multiple blocks.

With this, the observation function $\view_b=\view(\cdot,b)$ is naturally defined as
\begin{equation*}
\view_b(c) = \begin{cases}
  \hit   & \text{if } c(b) < \assoc\\
  \miss &  \mbox{else}
\end{cases}
\end{equation*}
The transition function $\up_b=\up(\cdot,b)$ is specified by:
\begin{equation}
\label{eq:cache_update}
 \up_b(c)(b') =
    \begin{cases}
      c(b') & \text{if }  b' \neq b \wedge c(b') = \assoc\\
      0 & \text{if } b' = b\wedge 
c(b)=\assoc \\
      c(b')+ 1 & \text{if } b' \neq b \wedge c(b')<\assoc  \wedge 
c(b)=\assoc \\
      \Pi_{c(b)}(c(b')) & \text{if } c(b') < \assoc \wedge c(b)<\assoc \\
    \end{cases}
\end{equation}

This transition function models \emph{permutation} replacement policies as defined in~\cite{abel2013measurement}. Upon a miss, $c(b)=A$, the accessed block is placed at the beginning of the cache, increasing the ages of younger blocks and evicting the block with age $\assoc-1$. In the case of a hit, each replacement policy reorders the blocks in a certain way, determined by the \emph{permutation} function $\Pi_\alpha(\alpha'):\ages\rightarrow \ages$; it modifies the current age $\alpha'$ of a block according to a \emph{base} age $\alpha$.

\ifshowproof\else Each replacement policy has its own permutation
function: FIFO does not reorder the blocks, LRU sets the age of the accessed
block to 0, and PLRU behaves similar to LRU but
with a more complex reorganization. We refer to the Mealy machines corresponding to LRU, PLRU, and FIFO caches by $M_\LRU, M_\PLRU$, and $M_\FIFO$, respectively.

The formalization of these policies, as well as the proofs of all
technical results are contained in the extended version of this paper.
\fi

\ifshowproof%
We introduce the definition the three permutation functions following the model developed in \cite{abel2013measurement}.

The FIFO (First In First Out) replacement policy does not change the ages of the blocks upon cache 
hits. It is is thus modeled as the identity permutation.
\begin{equation}
\label{eq:fifo_permutation}
\Pi^{\mathit{FIFO}}_\alpha(\alpha') = \alpha'
\end{equation}

The LRU (Least Recently Used) replacement policy sets the age of an accessed block to $0$ upon a 
cache hit, making sure that the least-recently-used blocks get evicted upon misses. 
Formally, we cast this behavior as
\begin{equation}
\label{eq:lru_permutation}
\Pi^{\mathit{LRU}}_\alpha(\alpha') = 
\begin{cases}
  0 & \text{if } \alpha' = \alpha\\
  \alpha' + 1 & \text{if } \alpha' < \alpha\\
  \alpha' & \text{if } \alpha' > \alpha
\end{cases}
\end{equation}

The PLRU (Pseudo Least Recently Used) is similar to LRU, but with a more complex permutation function. For an associativity which is a power of two, PLRU represents each cache set as a full binary tree 
storing the blocks at its leaves, and each non-leaf stores a bit which 
represents an arrow pointing to one of the children. Upon a cache miss, the 
block to 
be evicted is determined by following the arrows starting from the root. Upon 
any cache access (regardless whether it is a hit or a miss), the arrows on the way to the 
accessed block are 
flipped. Figure~\ref{fig:plru-example} shows an example of two consecutive 
cache hits in a 4-way cache.
In this paper we assume that the associativity is always a power of two.

\begin{figure}[tb]
\centering
\includegraphics[width=0.6\textwidth]{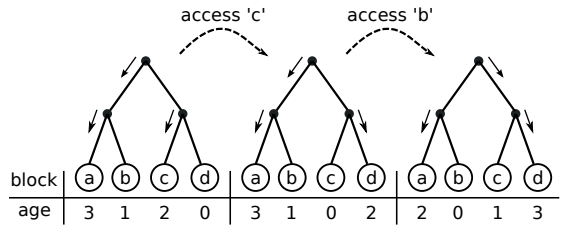}
\caption{An example of two consecutive cache hits with 
PLRU.}\label{fig:plru-example}
\end{figure}

We formally define this PLRU permutation policy $\Pi^{\mathit{PLRU}}$ 
as

\begin{equation}
\label{eq:plru_permutation}
\Pi^{\mathit{PLRU}}_\alpha(\alpha') =
\begin{cases}
 0 &\text{if } \alpha' = \alpha \\
 \alpha' &\text{if } \alpha \text{ even} \wedge \alpha' \text{ odd}\\
 \alpha' + 1 &\text{if } \alpha \text{ odd} \wedge \alpha' \text{ even}\\
 2\cdot\Pi^{\mathit{PLRU}}_{\lfloor \alpha/2\rfloor}(\lfloor \alpha'/2\rfloor) &  \text{ 
otherwise}
\end{cases}
\end{equation}
\fi

\subsection{Quantifying Absorption and Extraction}\label{sec:Markov_chains} 
 We characterize absorption and extraction in terms of the interactions of two agents, a victim and an adversary.
\begin{asparaitem}
\item The {\em victim} first chooses a secret, such as a cryptographic key. We model this using a random variable~$\randsecret$. The victim then uses this secret as input to a program that he runs to completion (or preemption) on a platform with a cache. We capture the effect of the victim's computation on the cache state in terms of a finite sequence of blocks from the set of \emph{victim's blocks} $B_v$, where $B_v\subseteq\memblock$. The cache uses this sequence as inputs to transition from an initial state 
  to the \emph{victim's state}. We model the victim's state using
  a random variable $\randkey_v$ that takes values in a set $\keys_v\subseteq \keys$, i.e. $\range{\randkey_v}=\keys_v$.

\item The {\em adversary} then runs a program on the same platform, which enables him to make observations about the state of the cache by measuring the latency of its memory accesses.\footnote{In the literature, this is known as an {\em access-based adversary}, e.g.~\cite{NeveS06}.} We model the adversary's actions in terms of a finite sequence of blocks from the subset of \emph{attacker's blocks} $B_a\subseteq\memblock$. Using the sequence of blocks as inputs, the cache transitions from the victim's state returning a sequence of hits and misses that we model with the random variable~$\randblock_a$, $\range{\randblock_a}\subseteq \obs^*$. We make the random variable dependent on the attacker since he can choose the sequence of blocks. Based on these observations, the adversary tries to guess the secret. We model the guess in terms of the random variable~$\randguess$.\footnote{Note that, while $\randkey_v$ and $\randblock_a$ are given in terms of inputs and outputs of the Mealy machine representing the cache, we do not assume any particular structure on $\randsecret$ and $\randguess$.}
\end{asparaitem}
We say that an attack is successful if the adversary correctly guesses the secret, i.e. if $\randsecret=\randguess$. 

We now give a high-level operational motivation for our definitions of information absorption and extraction, in terms of a bound on the probability of a successful attack. We assume that the distribution of each of these random variables depends only on the outcome of the previous one, i.e., that the distribution of cache states depends only on the secret, and that the adversary's observations depend only on the state of the cache. Then we can cast the dependencies between these random variables in terms of the following Markov chain:
\begin{equation}\label{eq:markov} 
\underset{\text{Secret}}{\randsecret} \overset{\overset{\text{Victim}}{|}}{\longrightarrow} \underset{\text{Cache State}}{\randkey_v} \overset{\overset{\text{Adversary probe}}{|}}{\longrightarrow} \underset{\text{Observation}}{\randblock_a} \overset{\overset{\text{Adversary guess }}{|}}{\longrightarrow} \underset{\text{Guess}}{\randguess} 
\end{equation}

The following result bounds the probability of a successful attack, i.e. $P(\randsecret=\hat{\randsecret})$, in terms of the size of the ranges of $\randkey_v$ and $\randblock_a$, respectively.

\begin{theorem}\label{thm:markov} 
  \begin{align} \label{eq-ex} P(\randsecret=\hat{\randsecret})&\leq
    \max_{x\in ran(\randsecret)} P(\randsecret=x)\cdot
    \sizeof{\range{\randblock_a}}\\
 \label{eq-abs}
    P(\randsecret=\hat{\randsecret})&\leq \max_{x\in ran(\randsecret)}
    P(\randsecret=x)\cdot \sizeof{\range{\randkey_v}}
\end{align} 
\end{theorem}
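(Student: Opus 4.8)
The plan is to use the conditional independences encoded in the Markov chain \eqref{eq:markov} to express the success probability as a sum over the possible observations, to bound the contribution of each observation by $\max_{x \in \range{\randsecret}} P(\randsecret = x)$, and then to conclude that the total is at most this quantity times the number of observations. I would prove \eqref{eq-ex} in full and then note that \eqref{eq-abs} follows by the identical argument with $\randkey_v$ substituted for $\randblock_a$.

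The starting point is the single structural fact I need: because $\randguess$ sits downstream of $\randblock_a$ in the chain $\randsecret \to \randkey_v \to \randblock_a \to \randguess$, the guess is conditionally independent of the secret given $\randblock_a$. This lets me insert the observation and factor the joint conditional:
\begin{equation*}
P(\randsecret = \randguess) = \sum_{o \in \range{\randblock_a}} P(\randblock_a = o) \sum_{x} P(\randsecret = x \mid \randblock_a = o)\, P(\randguess = x \mid \randblock_a = o).
\end{equation*}

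Next I would bound the inner sum. Since the guess distribution sums to one, pulling the maximum over $x$ out gives $\sum_x P(\randsecret = x \mid o)\,P(\randguess = x \mid o) \le \max_x P(\randsecret = x \mid o)$, which is the formal version of ``the best the adversary can do is guess the most likely secret for each observation.'' Multiplying by $P(\randblock_a = o)$ and rewriting with Bayes' rule turns the posterior into a likelihood times a prior, $P(\randblock_a = o)\max_x P(\randsecret = x \mid o) = \max_x P(\randblock_a = o \mid \randsecret = x)\,P(\randsecret = x)$, which is at most $\max_{x} P(\randsecret = x)$ because every likelihood $P(\randblock_a = o \mid \randsecret = x)$ is at most one. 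Summing this uniform per-observation bound over all $\sizeof{\range{\randblock_a}}$ observations yields \eqref{eq-ex}. The same chain of steps, using that $\randguess$ is also conditionally independent of $\randsecret$ given $\randkey_v$ (as $\randkey_v$ likewise separates $\randsecret$ from $\randguess$ in the chain), gives \eqref{eq-abs}.

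I expect the factorization in the display above to be the only delicate point, as it is where the Markov structure is essential: it is precisely the assumption that the guess depends on the secret only through the intervening variable that decouples $\randsecret$ and $\randguess$ once $\randblock_a$ (respectively $\randkey_v$) is fixed. The remaining steps are elementary — the ``maximum out of the sum'' inequality and the bound $P(\randblock_a = o \mid \randsecret = x) \le 1$ — and, notably, require no assumptions on the structure of the unmodeled variables $\randsecret$ and $\randguess$.
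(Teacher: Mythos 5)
Your proof is correct, but it is more self-contained than the paper's, whose entire argument is an appeal to prior work: the paper invokes Smith's result that observing a jointly distributed random variable reduces the min-entropy of a secret by at most the (logarithm of the) size of that variable's range, cast in the Markov-chain notation of the CacheAudit paper, and applies it once to $\randblock_a$ and once to $\randkey_v$. What you have written is, in effect, that cited result unrolled from first principles: your expansion $P(\randsecret=\randguess)=\sum_{o} P(\randblock_a=o)\sum_x P(\randsecret=x\mid \randblock_a=o)\,P(\randguess=x\mid \randblock_a=o)$ is the standard decomposition of the success probability, the inner bound $\max_x P(\randsecret=x\mid \randblock_a=o)$ is the posterior vulnerability, and the Bayes step $P(\randblock_a=o)\max_x P(\randsecret=x\mid \randblock_a=o)\le \max_x P(\randsecret=x)$, summed over the $\sizeof{\range{\randblock_a}}$ observations, is exactly the min-capacity bound in multiplicative form. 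Each route has its merits: the paper's citation is concise and places the theorem inside the established quantitative-information-flow framework, whereas your derivation makes explicit precisely which consequences of the Markov assumption are needed --- namely that $\randblock_a$ (respectively $\randkey_v$) separates $\randsecret$ from $\randguess$, so the guess is conditionally independent of the secret given the observed variable --- and it substantiates the paper's footnote that no structural assumptions on $\randsecret$ and $\randguess$ are required, since the only facts you use about them are that conditional distributions sum to at most one and that likelihoods are bounded by one.
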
 
\begin{Proof} 
The result is consequence of the fact that the reduction of min-entropy of $\randsecret$ when observing a jointly distributed random variable is upper-bounded by the size of the range of that variable~\cite{smith09}. We cast this result in terms of Markov chain notation as in~\cite{doychev2015cacheaudit}, and apply it to both $\randkey_v$ and $\randblock_a$.  
\end{Proof}
For an attacker that follows a deterministic strategy, the value of
$\randblock_a$ is determined by the value of $\randkey_v$. Therefore
$\sizeof{\range{\randblock_a}}\le\sizeof{\range{\randkey_v}}$, which
implies that~\eqref{eq-ex} leads to better security guarantees
than~\eqref{eq-abs}.

 Whenever additionally the value of $\randkey_v$ is determined by that
 of $\randsecret$ and $\randsecret$ is uniformly distributed, the
 bounds given by Theorem~\ref{thm:markov} are tight, in the sense that they can be achieved by computationally unbounded adversaries.

In this paper, we will use $\sizeof{\range{\randkey_v}}$ to capture the amount of information that is {\em absorbed} by the cache, and we will use $\sizeof{\range{\randblock_a}}$ to capture the amount of information that the adversary can {\em extract} from the cache. The operational significance of these quantities follows from Theorem~\ref{thm:markov}. We discuss how these quantities can be computed in Section~\ref{sec:absorption} and \ref{sec:extraction}, respectively.

\section{Absorption of Information}\label{sec:absorption}

\begin{wrapfigure}{r}{0.3\textwidth}
\vspace{-1ex}
\centering 
\includegraphics[width=0.25\textwidth]{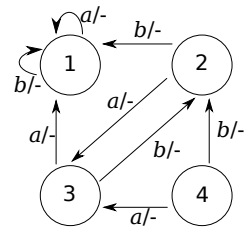} 
\caption{Example of Mealy machine} \label{fig:example_s0} 
\vspace*{-5ex}
\end{wrapfigure}

In this section we characterize the information absorption of different
cache replacement policies. That is, we characterize
$\range{\randkey_v}$ from \eqref{eq:markov} as a subset
$\keys_v\subseteq\keys$ of reachable victim's states of the Mealy machine
representing the cache.

Before we give the formal definition we note that the absorbed information depends on two things: the initial state of the Mealy machine and the inputs of the victim. To see the effect of the initial state $\keysval_0\in\keys$, consider the Mealy machine in Figure~\ref{fig:example_s0} and assume that the victim may use any sequence of inputs from $\queries_v^*=\{a,b\}^*$. If we start from the state $\keysval_0=1$ only that one state is reachable, $\keys_v=\{1\}$; if $\keysval_0=2,3$ then $\keys_v=\{1,2,3\}$ and finally if $\keysval_0=4$ then $\keys_v=\keys$. 

We capture the victim's inputs as a trace $t\in\queries_v^*$. This leads to the following definition of $\sizeof{\range{\randkey_v}}$.  
\begin{definition} \label{def:absorbed_information}
  We define the {\em absorbed information} of a Mealy machine
  $M=(\keys,\queries,\obs,\up,\view)$ w.r.t an initial state $\keysval_0$ and a set of traces $T\subseteq\queries_v^*$ as
\begin{equation*}
\ABS(M,\keysval_0,T)=\sizeof{\{\keysval\in \keys\mid \exists t\in T\colon \up_t(\keysval_0)=s\}},
\end{equation*}
\end{definition}
In the above definition of absorption, the set of traces $T$ is a parameter.  For a given program, existing static analysis techniques can be used to compute approximations of the set of traces $T$ and the induced absorption of a particular cache, modeled by a Mealy machine $M$.  In Section~\ref{sec:experiments} we present the results of a static analysis of two AES implementations.

In this section, our goal is to characterize the absorption properties of caches \emph{independently} of a particular program.  A worst case approach to this end is to study absorption under all possible traces $T=B_v^*$, given a set of memory blocks $B_v$.  For this, we first state several general results in Section~\ref{sec:caches_intake}, which show that the absorption of caches is independent of the particular set of memory blocks $B_v$ being accessed, and only depends on its size, $|B_v|$.  In Section~\ref{sec:Cache_abs}, we then use these general results to derive concrete results on the absorption properties of caches under LRU, FIFO, and PLRU replacement.

\subsection{Data Independence of Permutation Replacement Policies}\label{sec:caches_intake}
\paragraph{Initial State}
Absorption, as defined in Definition~\ref{def:absorbed_information} depends on the initial state of the Mealy machine.
Considering programs that may access the set of memory blocks $B \subseteq \memblock$, two types of initial states for caches are particularly interesting:

\begin{definition}
We say that a cache state $c\colon \memblock\rightarrow\ages$ is
\begin{compactenum}
\item {\em empty w.r.t. }$B$ if $c(B) = \{c(b) \mid b \in B\} =\{\assoc\}$. That is, none of
  the blocks in $B$ are cached.
\item {\em filled with} $B$ if
  $c(B)=\{0,\dots,\min(\assoc,\sizeof{B}-1)\}$. That is, the blocks in $B$
  occupy the cache. If $B$ contains less blocks than cache lines, we
  require that the first $|B|$ lines are filled.
\end{compactenum}
\end{definition}
The notions of \emph{empty} and \emph{filled} cache states are relative to a set of memory blocks.
We will consider empty and filled cache states relative to the memory blocks accessed by the victim, $ B_v$.
To conservatively capture the power of an attacker, ages without a victim's block mapped to them will be assumed to hold the attacker's memory blocks not accessible for the victim, that is, blocks from the set $B_a\setminus B_v$.

\paragraph{Data Independence}
The following result is central for our program-independent analysis of cache
replacement policies. It shows that absorption can be characterized
independently of the particular set of blocks $B$ that the victim may access:
\begin{theorem}\label{thm:dataindependence}
Whenever $\sizeof{B_1}=\sizeof{B_2}$, and $c_1$ is empty (filled) w.r.t. $B_1$ and $c_2$ empty (filled) w.r.t. $B_2$, then 
$$\ABS(M,c_1,B_1^*)=\ABS(M,c_2,B_2^*).$$
\end{theorem}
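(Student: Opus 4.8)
The plan is to exhibit an explicit bijection between the set of victim's states reachable from $c_1$ via traces in $B_1^*$ and those reachable from $c_2$ via $B_2^*$. Since $\ABS(M,\cdot,\cdot)$ is by definition the cardinality of such a reachable set, a bijection between the two sets immediately yields the claimed equality.

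First I would prove a relabeling lemma that formalizes the ``data independence'' of the section title. For any bijection $\pi\colon\memblock\to\memblock$, let $\pi^*\colon\keys\to\keys$ be the induced map on states given by $\pi^*(c)=c\circ\pi^{-1}$, i.e.\ relabel each block $b$ to $\pi(b)$ while keeping its age. This $\pi^*$ is a bijection on $\keys$ because relabeling preserves the injectivity constraint defining $\keys$ (distinct blocks map to distinct blocks, so ages below $\assoc$ stay distinct). The key identity is that the update commutes with relabeling,
\[
\up_{\pi(b)}(\pi^*(c)) = \pi^*(\up_b(c)) \qquad\text{for all } b\in\memblock,\ c\in\keys,
\]
which follows from a direct case analysis on \eqref{eq:cache_update}: since $\pi^*(c)(\pi(b'))=c(b')$ and $\pi^*(c)(\pi(b))=c(b)$, each case condition ($b'=b$ or not, $c(b')=\assoc$ vs.\ $<\assoc$, $c(b)=\assoc$ vs.\ $<\assoc$) is preserved, and each branch produces the same value. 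The only branch worth highlighting is the hit case, where the new age is $\Pi_{c(b)}(c(b'))$: because the permutation function $\Pi$ reads only ages and never block identities, it is automatically invariant under $\pi^*$, so the lemma holds uniformly for FIFO, LRU, and PLRU. A straightforward induction on trace length then lifts this to $\up_{\pi(t)}(\pi^*(c))=\pi^*(\up_t(c))$, where $\pi(t)$ relabels every block occurring in $t$.

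Next I would construct a specific $\pi$ that simultaneously sends $B_1$ to $B_2$ and $c_1$ to $c_2$. Because $\memblock$ is finite and $|B_1|=|B_2|$, we also have $|\memblock\setminus B_1|=|\memblock\setminus B_2|$. Under the empty (resp.\ filled) convention, the age-occupancy profile of a state is determined by $|B|$ alone: the victim's blocks occupy a fixed prefix of ages and every remaining occupied cache line holds a distinct non-victim (attacker) block, with no gaps. Hence $c_1$ and $c_2$ realize the same profile, and I can pick a bijection $\pi\colon\memblock\to\memblock$ with $\pi(B_1)=B_2$ that maps each block of $c_1$ to the block of $c_2$ of equal age, giving $\pi^*(c_1)=c_2$. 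Moreover $\pi$ restricts to a bijection $B_1\to B_2$, hence to a bijection $B_1^*\to B_2^*$ on traces via $t\mapsto\pi(t)$.

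Finally I would assemble the pieces. Applying $\pi^*$ to a reachable state and using the lifted relabeling lemma gives $\pi^*(\up_t(c_1))=\up_{\pi(t)}(\pi^*(c_1))=\up_{\pi(t)}(c_2)$; as $t$ ranges over $B_1^*$, the relabeled trace $\pi(t)$ ranges over all of $B_2^*$. Therefore $\pi^*$ restricts to a bijection between $\{\up_t(c_1)\mid t\in B_1^*\}$ and $\{\up_{t'}(c_2)\mid t'\in B_2^*\}$, so these sets have equal cardinality and $\ABS(M,c_1,B_1^*)=\ABS(M,c_2,B_2^*)$. I expect the main obstacle to be the construction of $\pi$ in the third step, namely arguing that the empty/filled conditions together with the attacker-fills-the-rest convention pin down the age profiles of $c_1$ and $c_2$ up to relabeling; the relabeling lemma itself is routine once one observes that $\Pi$ depends only on ages.
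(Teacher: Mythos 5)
Your proposal is correct and follows essentially the same route as the paper: the paper's Lemma~\ref{lem:dataindep} is exactly your relabeling lemma (with $\pi^*(c)=c\circ\pi^{-1}$ and the observation that the update function only performs equality checks on blocks and reads ages, so it commutes with relabeling), and the theorem is then obtained by choosing a bijection that sends $B_1$ to $B_2$ and $c_1$ to $c_2$. Your third step merely spells out, more carefully than the paper's one-line ``bijections exist between sets of equal cardinality,'' why the empty/filled conventions guarantee such a simultaneous bijection can be chosen.
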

The proof of Theorem~\ref{thm:dataindependence} follows from the
following lemma and the observation that one can define
bijections between all sets of equal cardinality.
\begin{lemma}\label{lem:dataindep}
Let $f\colon \memblock\rightarrow\memblock$ be a bijection. Then
\begin{equation*}
\ABS(M,c_0,B^*)=\ABS(M,c_0\circ f^{-1},(f(B))^*).
\end{equation*}
\end{lemma}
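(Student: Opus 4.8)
The plan is to show that relabeling memory blocks by a bijection $f$ sets up a one-to-one correspondence between the cache states reachable from $c_0$ using inputs in $B$ and those reachable from $c_0 \circ f^{-1}$ using inputs in $f(B)$. Concretely, I would show that the map $c \mapsto c \circ f^{-1}$ is a bijection on $\keys$ that commutes with the transition function in the sense that $\up_{f(b)}(c \circ f^{-1}) = \up_b(c) \circ f^{-1}$. Once this single-step relation is established, the sets counted on the two sides of the claimed equality are in bijective correspondence, and since $\ABS$ is defined as a cardinality, the two cardinalities must agree.

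First I would verify that $c \mapsto c \circ f^{-1}$ maps $\keys$ to $\keys$. The defining constraint on a state is that distinct blocks receive distinct ages unless both ages equal $\assoc$; since $f^{-1}$ is a bijection it preserves distinctness of arguments, so $c \circ f^{-1}$ satisfies the same constraint exactly when $c$ does, and the map is a bijection on $\keys$ with inverse $c' \mapsto c' \circ f$. Next I would check that the observation function is respected, namely $\view_{f(b)}(c \circ f^{-1}) = \view_b(c)$: this is immediate because $(c \circ f^{-1})(f(b)) = c(b)$, so the two states report a hit or miss on the respective accessed blocks under identical age comparisons against $\assoc$.

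The technical core is the single-step commutation $\up_{f(b)}(c \circ f^{-1}) = \up_b(c) \circ f^{-1}$, which I would prove by evaluating both sides at an arbitrary block $b''$ and setting $b' = f^{-1}(b'')$. The right-hand side is $\up_b(c)(f^{-1}(b'')) = \up_b(c)(b')$, while the left-hand side applies the update rule~\eqref{eq:cache_update} to the state $c \circ f^{-1}$ at accessed block $f(b)$. The key observations making the four cases of~\eqref{eq:cache_update} line up are that $(c \circ f^{-1})(f(b)) = c(b)$ governs the hit/miss branch identically on both sides, that $(c \circ f^{-1})(b'') = c(b')$ gives the relevant age, and that the side condition $b'' \neq f(b)$ is equivalent to $b' \neq b$ because $f$ is injective. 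Since the permutation function $\Pi$ and the arithmetic in each branch depend only on these ages, each of the four cases matches. Iterating this identity over a trace $t = b_1 \cdots b_k \in B^*$, whose image $f(t) = f(b_1)\cdots f(b_k)$ lies in $(f(B))^*$, yields $\up_{f(t)}(c_0 \circ f^{-1}) = \up_t(c_0) \circ f^{-1}$.

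The main obstacle I anticipate is the bookkeeping in the case analysis of~\eqref{eq:cache_update}: one must be careful that the composition with $f^{-1}$ is applied on the correct side and that the accessed block on the left is $f(b)$ rather than $b$, so that the argument $b''$ on the left corresponds to $b' = f^{-1}(b'')$ on the right throughout. With the commutation established, the conclusion is structural: $c \mapsto c \circ f^{-1}$ restricts to a bijection between $\{s \mid \exists t \in B^* \colon \up_t(c_0) = s\}$ and $\{s' \mid \exists t' \in (f(B))^* \colon \up_{t'}(c_0 \circ f^{-1}) = s'\}$, so the two sets have equal cardinality and $\ABS(M, c_0, B^*) = \ABS(M, c_0 \circ f^{-1}, (f(B))^*)$ as required.
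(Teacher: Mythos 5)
Your proposal is correct and follows essentially the same route as the paper's proof: establish the single-step equivariance $\up_{f(b)}(c\circ f^{-1})=\up_b(c)\circ f^{-1}$ (the paper states this as $\up_b(c)(b')=\up_{f(b)}(c\circ f^{-1})(f(b'))$, justified by the update rule depending only on equality checks and age comparisons), iterate it over traces, and conclude equality of cardinalities via the relabeling bijection. Your version merely makes explicit some steps the paper leaves implicit, such as verifying that $c\mapsto c\circ f^{-1}$ is a well-defined bijection on $\keys$ and spelling out the four-case analysis of~\eqref{eq:cache_update}.
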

\begin{Proof}
For any state $c$ and any blocks $b,b'$ we have $\up_{b}(c)(b')=\up_{f(b)}(c\circ f^{-1})(f(b'))$. This is because the transition functions of caches do not consider the block itself, they only perform equality checks or compare the ages. Since $f$ is a bijection and $c\circ f^{-1}(f(b))=c(b)$ for all $b$, the output of the update functions coincides. Therefore, every update from $c_0$ with any trace of blocks $b_1\dots b_n$ produces the same state as updating $c_0\circ f^{-1}$ with the trace $f(b_1) \dots f(b_n)$. 
\end{Proof}

We focus on filled and empty initial states since they represent the two extremes for the information absorption. Consider a \emph{partially filled} state $c$, that is, where there is a sequence of distinct blocks $b_0\ldots b_n$ with $n\leq\min(\assoc,\sizeof{B}-1)$ such that $c(b_i)=i$ for $i\leq n$. Then, any state reachable from $c$ by inputting a trace $t\in B^*$ is reachable from an empty one $c_e$ with the trace $t'=b_n\ldots b_0 t$. Since $c_e$ is empty, we load the blocks $b_0\ldots b_n$ in reverse order; these access produce misses and so, after the updates, $\up_{b_0}\cdots\up_{b_n} (c_e)(b_i)=i$, see \eqref{eq:cache_update}. Therefore $\ABS(M,c,B^*)\leq \ABS(M,c_e,B^*)$. Using this argument we can see that, for the same set of memory blocks, the value of the absorbed information is the smallest when starting on a filled state and is the largest when starting on an empty state.

An important consequence of Theorem~\ref{thm:dataindependence} is
that, given an identical status, i.e. empty or filled, of the initial
state, the amount of absorbed information depends only on the number
of blocks in $ B_v$. We call this number the {\em footprint}
and denote it by $\footprint=\sizeof{ B_v}$.  This terminology
is loosely connected with the notion of a memory footprint as used in
the theory of locality~\cite{xiang13hotl}. Theory of locality defines 
the footprint as the number of distinct memory blocks accessed during a time window, i.e.  on a trace of a given length.
In our case we consider this length to be unbounded so the trace is the whole execution of the program. This motivates the specialization of the
definition of the absorbed information in terms of the footprint, namely
\begin{equation*} 
\ABS_x(M,\footprint)=\ABS(M,c_0,( B_v)^*)\ ,
\end{equation*}
where we use the subscript $x=\emptyc$ to denote that $c_0$ is
empty w.r.t. $ B_v$, and $x=\filledc$ to denote that $c_0$ is filled
w.r.t. $ B_v$.

\subsection{Analysis of Cache Replacement Policies}
\label{sec:Cache_abs}
Next we give a summary of our program-independent analysis of the absorption for
each replacement policy. 
\subsubsection{Results for Filled Caches}
For some replacement policies, when the cache is filled and the footprint is small enough, some cache states are unreachable from the initial state, which reduces the information absorption. The details for each policy are given below. 
In case \emph{every} state of the cache is reachable, we count all the possible feasible mappings of $\footprint$ blocks to the set of ages $\ages$. Then the absorbed information is the number of \emph{k-permutations of n} of the memory blocks, i.e., the number of different ordered arrangements of $\footprint$ blocks in a sequence of up to $\assoc$ elements.

\begin{proposition}\label{prop:first}
For $M_\LRU$, the absorbed information for a filled cache is:
$$
\ABS_\filledc(M_\LRU,\footprint)=\begin{cases}
\footprint!&\text{if }\footprint<\assoc,\\
\frac{\footprint!}{(\footprint-\assoc)!}&\text{if }\footprint\geq \assoc.
\end{cases}
$$
\end{proposition}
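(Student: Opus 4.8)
The plan is to show that the reachable states are in bijection with the orderings of the victim's blocks inside the cache, and then to count those orderings by elementary combinatorics; LRU is precisely the case in which \emph{every} feasible arrangement turns out to be reachable, so no states are lost. Throughout I would lean on the two defining features of the permutation $\Pi^{\mathit{LRU}}$ of~\eqref{eq:lru_permutation}: on a hit the accessed block jumps to age $0$, the strictly younger blocks (age $<\alpha$) are incremented, and the strictly older blocks (age $>\alpha$) are untouched; on a miss, by~\eqref{eq:cache_update}, the accessed block enters at age $0$, every cached block ages by one, and the block at age $\assoc-1$ is evicted. Let $c_0$ be filled w.r.t. $B_v$.

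First I would establish that the non-victim blocks never change age along any trace $t\in B_v^*$, distinguishing the two regimes. When $\footprint\ge\assoc$ all non-victim blocks start at age $\assoc$; a victim access never names a non-victim block, and~\eqref{eq:cache_update} keeps a non-accessed uncached block uncached, so they remain at age $\assoc$ forever. When $\footprint<\assoc$ the non-victim blocks occupy ages $\footprint,\dots,\assoc-1$ and $\assoc$, while the victim blocks occupy ages $0,\dots,\footprint-1$; every victim access is then a hit at some age $\alpha\le\footprint-1$, so by $\Pi^{\mathit{LRU}}$ the non-victim blocks (all of age $>\alpha$ or uncached) stay fixed while the victim blocks are merely permuted among ages $0,\dots,\footprint-1$. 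Consequently a reachable state is fully determined by the arrangement of the victim blocks, which yields the upper bound: at most $\footprint!$ states when $\footprint<\assoc$ (orderings of all $\footprint$ blocks in ages $0,\dots,\footprint-1$) and at most $\footprint!/(\footprint-\assoc)!$ states when $\footprint\ge\assoc$ (ordered choices of $\assoc$ of the $\footprint$ blocks for the cached ages $0,\dots,\assoc-1$, the remainder sitting at age $\assoc$).

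For the matching lower bound I would exhibit, for every feasible arrangement, a trace reaching it by a ``reverse access'' trick. To realise a target in which blocks $a_0,\dots,a_{k-1}$ occupy ages $0,\dots,k-1$ (with $k=\footprint$ in the first regime and $k=\assoc$ in the second), access the sequence $a_{k-1}a_{k-2}\cdots a_0$. Since these blocks are distinct, each access moves its block to age $0$, so the final recency order is $a_0,\dots,a_{k-1}$ and hence $a_i$ sits at age $i$; in the second regime the $\assoc$ distinct accesses additionally flush every other block to age $\assoc$, matching the target exactly, and in both regimes the non-victim blocks are left exactly where the invariant places them. Thus each of the $\footprint!$ (resp. $\footprint!/(\footprint-\assoc)!$) feasible arrangements is reachable, and together with the upper bound this gives the stated formula; the two cases agree at $\footprint=\assoc$, where $\footprint!/(\footprint-\assoc)!=\footprint!$.

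The main obstacle I anticipate is the reachability argument rather than the counting: one must argue carefully that the reverse-access trace produces \emph{exactly} the target state, i.e. that move-to-front accesses generate the full symmetric group on the cached blocks and that, in the $\footprint\ge\assoc$ regime, $\assoc$ distinct accesses determine the entire cache content independently of the starting state. The bookkeeping of the non-victim blocks in the upper bound is the other delicate point, since it is what licenses counting full cache states rather than mere victim orderings.
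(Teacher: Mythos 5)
Your proposal is correct and follows essentially the same route as the paper's proof: the paper also establishes reachability by accessing the target state's blocks from oldest to youngest (your ``reverse access'' trick) and then counts the $k$-permutations of $n$, giving $\footprint!$ when $\footprint<\assoc$ and $\footprint!/(\footprint-\assoc)!$ otherwise. The only difference is one of rigor, in the paper's favor of brevity: the paper asserts that the reachable states are exactly these arrangements, whereas you explicitly verify the upper bound by showing the non-victim blocks never change age, which is a worthwhile (if implicit in the paper) detail.
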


\begin{Proof}
Every cache state is reachable, even if starting from a filled cache. Since upon a hit the input block obtains age zero, any state can be reached simply accessing the target state's blocks from oldest to youngest, see \eqref{eq:lru_permutation}.  Therefore, the absorbed information is the number of k-permutations of n of the available blocks. If $\footprint<\assoc$, all the victim's blocks fit in the cache and the k-permutations of n are $\footprint!$. If $\footprint\geq\assoc$, we can only fit $\assoc$ blocks in the cache, which gives the value $\footprint!/(\footprint-\assoc)!$. 
\end{Proof}

\begin{proposition}
For $M_\FIFO$, the absorbed information for a filled cache is:
$$
\ABS_\filledc(M_\FIFO,\footprint) =
\begin{cases}
1 & \text{ if }\footprint\leq\assoc, \\
\assoc+1 & \text{ if }\footprint=\assoc+1,\\
\frac{\footprint!}{(\footprint-\assoc)!} & \text{ if }\footprint>\assoc+1.
\end{cases}
$$
\end{proposition}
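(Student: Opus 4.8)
The plan is to split along the three regimes of the statement and, in each, to determine exactly which cache states are reachable from the filled initial state under victim accesses drawn from $B_v$. The one structural fact that drives everything is that $\FIFO$ leaves the state unchanged on a hit (its permutation is the identity, by \eqref{eq:fifo_permutation}), so the only state-changing accesses are misses, each of which inserts the accessed block at age $0$, ages the others, and evicts the block of age $\assoc-1$.

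First, the easy regimes. If $\footprint \le \assoc$, the filled cache already contains all $\footprint$ victim blocks, so every access in $B_v$ is a hit and hence a no-op; the initial state is the only reachable one and $\ABS_\filledc(M_\FIFO,\footprint)=1$. If $\footprint=\assoc+1$, exactly one victim block lies outside the cache at any time, so the only state-changing access is to that block; such a miss cyclically rotates the fixed circular arrangement of the $\assoc+1$ blocks. There are exactly $\assoc+1$ such rotations and they are all reachable, giving $\ABS_\filledc(M_\FIFO,\footprint)=\assoc+1$.

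The substance is the regime $\footprint>\assoc+1$. Here a state is an ordered $\assoc$-tuple of distinct blocks (youngest to oldest), and there are exactly $\footprint(\footprint-1)\cdots(\footprint-\assoc+1)=\footprint!/(\footprint-\assoc)!$ of them; the claim is that every one is reachable from the filled state. I would prove this by an explicit access schedule exploiting that at least two blocks ($\footprint-\assoc\ge 2$) are always outside the cache. To reach a target $(t_0,\dots,t_{\assoc-1})$, fix a cyclic ordering of all $\footprint$ blocks whose final $\assoc$ entries are $t_{\assoc-1},\dots,t_1,t_0$, and feed this cycle repeatedly. The point is that once $\assoc$ consecutive misses have occurred, the cache equals the last $\assoc$ accessed blocks, which form an all-distinct window of the cycle (as $\assoc+1\le\footprint$); hence every subsequent access is again a miss and the regime is self-sustaining. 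After a bounded transient the schedule is therefore all-misses, and stopping at the end of a period leaves the cache equal to the reversed tail, i.e. exactly $(t_0,\dots,t_{\assoc-1})$. Counting the states then yields $\footprint!/(\footprint-\assoc)!$.

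I expect the reachability in the last regime to be the main obstacle, and specifically the need to show that the all-miss steady state is actually reached (rather than merely self-sustaining). This is exactly where the hypothesis $\footprint>\assoc+1$ is essential: for $\footprint=\assoc+1$ the periodic schedule keeps interleaving hits and never attains $\assoc$ consecutive misses, which is why only the $\assoc+1$ rotations survive, whereas with two or more spare blocks the transient hits are flushed after finitely many accesses. An alternative, more structural route to the same conclusion is to observe that the miss-transition graph on the $\assoc$-tuples is vertex-transitive --- any block bijection is an automorphism by Lemma~\ref{lem:dataindep} --- and regular, so that strong connectivity (hence reachability of all states) reduces to mere weak connectivity, which is immediate for $\footprint\ge\assoc+2$; I would use whichever of the two arguments turns out to be shorter to write cleanly.
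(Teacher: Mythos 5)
Your handling of the regimes $\footprint\le\assoc$ and $\footprint=\assoc+1$ is correct and coincides with the paper's argument. The gap is in the regime $\footprint>\assoc+1$, at exactly the step you flagged: the claim that feeding a fixed cyclic order repeatedly must, after a bounded transient, enter the all-miss sliding-window regime is false. Counterexample: let $\assoc=3$, $B_v=\{b_1,\dots,b_5\}$, take the cyclic order $b_1b_2b_3b_4b_5$ (so the target is $[b_5,b_4,b_3]$), and let the filled initial state contain $b_1,b_3,b_5$ with insertion order $b_1,b_3,b_5$, i.e.\ the state $[b_5,b_3,b_1]$ written youngest to oldest. Feeding the cycle gives: $b_1$ hit, $b_2$ miss (evicts $b_1$), $b_3$ hit, $b_4$ miss (evicts $b_3$), $b_5$ hit; then $b_1$ miss (evicts $b_5$), $b_2$ hit, $b_3$ miss (evicts $b_2$), $b_4$ hit, $b_5$ miss (evicts $b_4$) --- after which the state is again $[b_5,b_3,b_1]$. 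The dynamics are periodic, hits and misses alternate forever, two consecutive misses never occur, and the target is never reached, even though it is reachable by other schedules. So an \emph{oblivious} periodic schedule does not suffice: whether the sliding-window regime is attained depends on how the initial FIFO queue interleaves with the chosen cycle. (Your side remark about $\footprint=\assoc+1$ is also backwards: there, cyclic accessing \emph{can} settle into all misses; what limits that case is that the identity of each miss is forced, since only one victim block is ever uncached.)

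Your fallback argument inherits the same gap. The reduction from strong to weak connectivity is sound (vertex-transitivity gives in-degree equal to out-degree everywhere, and a weakly connected balanced digraph is strongly connected), but weak connectivity of the miss-transition graph is not ``immediate.'' The undirected moves are ``prepend an uncached block, dropping the oldest'' and ``drop the youngest, appending an uncached block at the old end,'' and building a prescribed \emph{ordered} tuple with these runs into precisely the interference problem that makes the directed question hard: flushing out a block that sits in the wrong position destroys the portion of the tuple already built. That difficulty is the entire content of the proposition, so the fallback assumes what is to be proven. The paper escapes it by making the schedule \emph{state-dependent}: it cycles by always accessing the block that was \emph{just evicted}, so every access is a miss by construction and relative insertion order is preserved; it then builds the target insertion order from oldest to youngest, at each stage evicting the next target block, keeping it ``stored'' outside the cache while cycling (possible because $\footprint\ge\assoc+2$ guarantees a second uncached block to cycle with), waiting until the previously placed target block returns to age $0$, and only then inserting the stored block. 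To repair your proof, either adopt such an adaptive schedule, or prove --- rather than assert --- that for every initial queue some choice of cyclic order and starting phase reaches the sliding-window regime.
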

\begin{Proof}
An important property of FIFO is that it does not reorder cached blocks upon hits, see \eqref{eq:fifo_permutation}. So, for $\footprint\leq\assoc$ and a filled cache state, all the accesses are hits and leave the state in its original form. 

When $\footprint=\assoc+1$ every reachable state contains all but one of the $\footprint$ many blocks. For a set of blocks $\{b_1,\ldots, b_{\assoc+1}\}$ assume that the cache is initially in the following state:
$$[b_1,b_2, \ldots,b_\assoc],$$
where the leftmost element of the list has age zero and the one on the right is the oldest. An access to blocks $b_1,\ldots, b_\assoc$ results in a hit and leaves the state as it is. The only way to change the state is by inputting $b_{\assoc+1}$ and causing a miss, which results in the following state:
$$[b_{\assoc+1},b_1,b_2, \ldots,b_{\assoc-1}].$$
Again the only way to change the state is by inputting $b_\assoc$. After that it can only be changed by inputting $b_{\assoc-1}$, then $b_{\assoc-2}$, and so on until reaching again the initial state. By doing this we are just \emph{cycling} over the blocks, always evicting them in the same order. This produces $\assoc+1$ distinct reachable states.

If $\footprint\geq \assoc+1$ we prove we can reach every target state by inputting a sequence of blocks. Now there are two blocks outside the cache so we use one for cycling as before, having always an extra block outside. For illustration purposes we take the initial state $[b_1,b_2, \ldots,b_\assoc]$, and the target state $[b_\assoc, \ldots,b_2,b_1] $. We use an algorithmic approach to reach the target state from the initial one.  

We proceed from oldest to youngest block of the target state. First we cycle the cache until the second youngest target block, $b_2$, is evicted, we call this the \emph{stored} block. After that we continue to cycle the cache but without using the stored block, so it is not updated into the cache until we want to. When the oldest target block, $b_1$, is placed at the beginning of the cache, we input the stored block. This way we obtain the two oldest blocks of the target state in the target order $b_2,b_1$.

We cycle again until the third youngest target block, $b_3$, is evicted and cycle again until the second youngest $b_2$, is at the beginning. Then we access $b_3$ and obtain the three oldest blocks in the target order. We proceed in the same manner until the target state is reached. 
\end{Proof}

\begin{proposition}
\label{pro:abs_PLRU}
For $M_\PLRU$, the absorbed information for a filled cache is:

$$\ABS_\filledc(M_\PLRU,\footprint)=\begin{cases}
2^{\footprint-1}&\text{if } 1\leq\footprint\leq\assoc,\\
\frac{\footprint!}{(\footprint-\assoc)!}&\text{if }\footprint> \assoc.
\end{cases}$$

\end{proposition}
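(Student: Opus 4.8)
The plan is to treat two regimes separately: the \emph{filled} regime $\footprint\le\assoc$, in which every victim access is a hit, and the regime $\footprint>\assoc$, in which misses are available. This split is natural because, by \eqref{eq:cache_update}, the miss behaviour is identical for all policies (the block at age $\assoc-1$ is evicted and all ages are shifted up), whereas the policy-specific reordering $\Pi^{\PLRU}$ only governs hits. I would also record at the outset that when $\footprint\le\assoc$ and the cache is filled none of the $\footprint$ victim blocks is ever evicted, so all of them remain cached and hence accessible at every step.

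For $\footprint>\assoc$ I would show that \emph{every} $\assoc$-permutation of the $\footprint$ blocks is reachable, so that $\ABS_\filledc(M_\PLRU,\footprint)$ equals the number of such permutations, $\footprint!/(\footprint-\assoc)!$. The upper bound is immediate, since every reachable state is an injective assignment of $\assoc$ of the blocks to the ages $0,\dots,\assoc-1$. For the lower bound I would reuse the construction from the $M_\LRU$ and $M_\FIFO$ proofs: since misses behave uniformly, I can realise any target arrangement $[t_0,\dots,t_{\assoc-1}]$ by accessing $t_{\assoc-1},\dots,t_0$ as a sequence of misses, using the surplus of $\footprint-\assoc\ge 1$ non-target blocks to first flush the target blocks out of the cache so that each of these accesses is indeed a miss.

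The core of the argument is the filled regime $\footprint\le\assoc$, where I claim exactly $2^{\footprint-1}$ states are reachable. Here I would induct on the tree depth $k$, where $\assoc=2^k$, exploiting the recursive definition of $\Pi^{\PLRU}$ in \eqref{eq:plru_permutation}. Splitting the ages into even and odd and halving them ($\text{age}\mapsto\lfloor\text{age}/2\rfloor$) represents each state as a pair $(c_E,c_O)$ of sub-states of a PLRU cache of associativity $2^{k-1}$. Reading off \eqref{eq:plru_permutation}, accessing an even-aged block acts as a PLRU hit inside $c_E$ and leaves $c_O$ untouched, while accessing an odd-aged block \emph{swaps} the two sub-caches (applying one sub-hit in the process). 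Consequently the multiset of victim-block counts of the two sub-caches is invariant and equal to $\{\lceil\footprint/2\rceil,\lfloor\footprint/2\rfloor\}$, with both entries at most $2^{k-1}$, so each sub-cache is itself a filled instance of the next-smaller problem. Writing $N_k(\footprint)$ for the number of reachable states, this decomposition yields $N_k(\footprint)=2\cdot N_{k-1}(\lceil\footprint/2\rceil)\cdot N_{k-1}(\lfloor\footprint/2\rfloor)$, where the factor $2$ records which sub-cache currently holds the even ages; together with the base case $N_0(1)=1$ this recurrence solves to $2^{\footprint-1}$.

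The main obstacle is the lower bound in this last step, i.e. establishing that \emph{all} combinations of (a choice of which sub-cache is even, a reachable configuration of one sub-cache, a reachable configuration of the other) actually occur, so that the counts genuinely multiply. The even sub-cache is easy to drive to any target by the induction hypothesis using even accesses, which do not touch $c_O$; the difficulty is the odd sub-cache, which can only be manipulated through the swap induced by odd accesses. I would therefore need a careful ``swap--manipulate--swap'' scheme showing that $c_O$ can be brought to any reachable configuration and the phase restored without losing control of $c_E$, and then confirm that distinct triples give distinct age functions (which holds because the two sub-caches always hold disjoint block sets). The matching upper bound is comparatively routine: it follows from checking that the invariants above (the size multiset, membership of each sub-cache in its own reachable set, and the binary phase) are preserved by every access.
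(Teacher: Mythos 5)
Your treatment of the regime $\footprint>\assoc$ has a genuine gap. The ``flush, then miss'' construction is impossible whenever $\assoc<\footprint<2\assoc$: once filled, the cache always holds exactly $\assoc$ blocks, so at most $\footprint-\assoc$ blocks are uncached at any moment. For $\footprint=\assoc+1$ in particular, only \emph{one} block is ever outside the cache, so you can never arrange for the accesses $t_{\assoc-1},\dots,t_0$ to all be misses; a pure-miss sequence from the filled state then only realizes cyclic shifts of the current contents. Indeed, FIFO is the proof that uniform miss behaviour alone cannot give the claimed count: its misses are identical to PLRU's, yet $\ABS_\filledc(M_\FIFO,\assoc+1)=\assoc+1$, far below the value $(\assoc+1)!$ claimed for PLRU. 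What rescues PLRU --- and what the paper's proof uses --- is precisely the hit behaviour you set aside in this regime: hits alone can drive the arrows to any configuration, hence choose \emph{which} cached block the next miss evicts and \emph{which} leaf a newly loaded block lands in; one then evicts target blocks one at a time and re-inserts them at chosen positions. Any correct proof of the case $\footprint>\assoc$ must invoke this mechanism; as stated, your argument establishes nothing for $\assoc<\footprint<2\assoc$.

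For the regime $1\leq\footprint\leq\assoc$ your even/odd recursive decomposition is sound and genuinely different from the paper, which instead counts directly the internal tree nodes having victim blocks in both subtrees (there are $\footprint-1$ of them, each contributing a freely orientable arrow, giving $2^{\footprint-1}$). Your route is attractive because it reduces everything to one recurrence, but it needs two repairs. First, the lower bound you flag as ``the main obstacle'' requires strengthening the induction hypothesis to strong connectivity --- from \emph{any} reachable sub-state one can reach any other reachable sub-state --- after which your swap--manipulate--swap scheme does go through: access a block of the odd group to make it even (the forced sub-hit is harmless by connectivity), drive that sub-cache to its target, then access a block of the other group to restore the phase (again a harmless forced sub-hit, since that sub-cache is adjusted last), and finally drive the now-even sub-cache to its target. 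Second, the recurrence $N_k(\footprint)=2\cdot N_{k-1}(\lceil\footprint/2\rceil)\cdot N_{k-1}(\lfloor\footprint/2\rfloor)$ overcounts when one sub-cache holds no victim block: with $\footprint=1$ the phase can never be flipped, because flipping requires accessing an odd-aged victim block, so the factor $2$ must be dropped there. Hence you need $N_j(1)=1$ as a base case at \emph{every} level $j$ --- it is already needed to evaluate $N_k(2)$ and $N_k(3)$ --- not only $N_0(1)=1$.
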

\begin{Proof}
Conceptually, PLRU maintains a binary tree with the blocks at the leaves and arrows on each non-leaf node pointing to one of the children, see Section~\ref{sec:replacement}.
Since the initial state is filled, all the victim's blocks are stored in some leaves of the tree.

If $\footprint\leq \assoc$, all the inputs are going to produce hits, and the only update is a permutation of the ages which, in the tree representation, is the flipping of the arrows away from the input block. Then, the number of reachable states depends on how many of these arrows can point in more than one direction. An arrow may point in any direction if its children both have at least one victim's block that can be used to flip the arrow.
So we we need to determine the number of non-leaf nodes where the two children have each at least one block.

Following the tree from root to leaves we can view the internal nodes with victim's blocks at the leaves reachable from both of its children as partitions of the set of victim's blocks into two subsets. This way, counting nodes with blocks in both children is equivalent to counting the number of partitions that can be performed until we obtain all singleton subsets. The amount of times we can partition a set of $\footprint$ elements into two subsets until obtaining singleton subsets is $\footprint-1$, independently of how the partitions are done.
Therefore, there are $\footprint-1$ arrows that may point to any direction. This produces $2^{\footprint-1}$ reachable states.

If $\footprint>\assoc$ we prove that every state is reachable from an initial one by inputting a sequence of blocks. For this we consider the target state as a tree with the blocks placed in specific leaves and the arrows in specific orientations.  We divide the proof in two parts, first the case where the target blocks are already in the initial state and later when they are not.

If the blocks in the initial state are the same as in the target state, that is, if for every subtree of the initial state there is a subtree in the target state with the same blocks (not considering the arrows), we have a different permutation of ages between the two. What we need to do then is obtain the correct permutation of the ages by inputting a sequence of blocks. This can be done by using only the blocks in the state as we see now. We recall that, from the case above, when $\footprint=\assoc$, the initial state has a victim block in every leaf. Therefore, there exists a sequence using only at most $\assoc$ blocks that reaches a state for every permutation of the arrows.

If the blocks in the initial state are not the same as in the target state, or if they are in wrong subtrees, we need to evict precise blocks from the tree and load them back in a different leaf in order to reach the target state. Again, since by only using blocks in the cache we can obtain any permutation of arrows, we can obtain a sequence that points the arrows to a specific block, evicts it and then modifies the arrows to load it back in a different leaf. Once the target blocks are in their corresponding leaves, we shift the arrows as before to obtain the target state. 
\end{Proof}

\subsubsection{Results for Empty Caches}
The case of an empty cache is more complex to analyze. First we need to explain a special behavior of PLRU that produces extra reachable states which increases its absorption with respect to the other two policies.
\begin{example}
\label{ex:PLRU_states}
Consider a 4-way cache  that starts in a state consisting of the attacker's blocks $\{x_0,x_1,x_2,x_3\}\subseteq B_a$ where we are going to access three victim blocks in a specific order, $a,b,c\in B_v$. For any of the three replacement policies the state becomes:
$$[x_0,x_1,x_2,x_3]
\underset{a}{\rightsquigarrow}
[a,x_0,x_1,x_2]
\underset{b}{\rightsquigarrow}
[b,a,x_0,x_1]
\underset{c}{\rightsquigarrow}
[c,b,a,x_0],$$
where the leftmost element of the lists has age zero and the one on the right is the oldest. Consider that we now access block $b$ again. The cache states transition to:
$[b,c,a,x_0]$ for LRU,
$[c,b,a,x_0]$ for FIFO and
$[b,c,x_0,a]$ for PLRU (note the age of the last attacker's block $x_0$).
The state obtained by PLRU is unreachable for the other two replacement policies, since they always fill up the cache consecutively from left to right. This illustrates how the information absorption for PLRU is larger than for the other policies.
\end{example}

The example is independent of the blocks being used but a consequence of the fact that we are inputting $k<\assoc$ blocks. For LRU and FIFO, any sequence using $k<\assoc$ victim blocks will transform an initial state $[x_0, x_1, \dots, x_{A-1}]$ to a state of the form $[\_, \dots, \_, x_0, \dots, x_{A-1-k}]$, where victim blocks are denoted by ``$\_$''. In the case of PLRU this is not always the case, as the previous example shows.

Following our definition of absorption, we assume that the victim may input any sequence of blocks.
Then the number of reachable cache states can be determined as follows:
\begin{compactenum}
	\item Determine the set of reachable \emph{configurations}, i.e., cache states in which the victim's memory blocks are not distinguished from each other, but instead represented by the \emph{placeholder} ``$\_$''.
	\item Determine for each configuration the number of concrete cache states the configuration represents, i.e., the number of ways the victim's blocks may fill its placeholders.
\end{compactenum}
This procedure can further be simplified upon by the following observation:
The number of concrete cache states that a configuration represents, only depends on its number of placeholders and the number of victim blocks to consider:
Given $k$ placeholders and $\footprint \geq k$ victim's memory blocks, a configuration represents exactly $\frac{\footprint!}{(\footprint-k)!}$ cache states.

Let $\Lambda_M(k, A)$ denote the number of reachable configurations under policy~$M$, associativity $A$, with exactly $k$ placeholders.
Accessing $\footprint$ distinct memory blocks may yield configurations with $0$ to $\footprint$ many placeholders.
Based on this notion, we obtain the following general characterization of a replacement policy's absorption:
\begin{proposition}\label{prop:last}
For any replacement policy $M$, the absorbed information starting from an empty cache is:
\begin{equation*}
\ABS_\emptyc(M,\footprint)=
\sum\limits_{k=0}^{\min\{\footprint, \assoc\}}\Lambda_M(k,\assoc)\frac{\footprint!}{(\footprint-k)!}.
\end{equation*}
\end{proposition}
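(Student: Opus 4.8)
The plan is to prove the identity by partitioning the set of reachable victim states $\keys_v$ according to the \emph{configuration} each state induces, and then counting the states contributed by each block of the partition. Write $\phi$ for the abstraction map sending a cache state $c$ to its configuration $\phi(c)$, obtained by replacing every victim block in $c$ by the placeholder ``$\_$'' while leaving the attacker blocks and all ages intact. Since $\phi$ is a well-defined function on $\keys$, its fibers $\phi^{-1}(\gamma)$ partition the reachable states as $\gamma$ ranges over the reachable configurations, so that $\ABS_\emptyc(M,\footprint)=\sizeof{\keys_v}=\sum_\gamma \sizeof{\phi^{-1}(\gamma)\cap\keys_v}$. The proposition then reduces to two claims: (i) the number of reachable configurations with exactly $k$ placeholders is $\Lambda_M(k,\assoc)$, which holds by definition, and (ii) each reachable configuration with $k$ placeholders has a fiber of reachable states of size exactly $\frac{\footprint!}{(\footprint-k)!}$.

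First I would record the purely combinatorial half of (ii). A configuration $\gamma$ with $k$ placeholders fixes $k$ distinct ages that must be occupied by victim blocks; a concrete state abstracting to $\gamma$ is specified by an injective assignment of victim blocks, drawn from the $\footprint$ available, to these $k$ ordered positions. There are exactly $\footprint(\footprint-1)\cdots(\footprint-k+1)=\frac{\footprint!}{(\footprint-k)!}$ such assignments. This also explains the range of the sum: at most $\footprint$ victim blocks can appear and at most $\assoc$ fit in the cache, so $k$ runs only up to $\min\{\footprint,\assoc\}$.

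The crux, and the step I expect to be the main obstacle, is to upgrade this combinatorial count into a statement about reachable states: all $\frac{\footprint!}{(\footprint-k)!}$ fillings of a reachable configuration are themselves reachable. Here I would invoke data independence (Lemma~\ref{lem:dataindep}). Suppose $c=\up_t(c_0)$ abstracts to $\gamma$ and is reachable from the empty state $c_0$ via a trace $t=b_1\cdots b_n\in B_v^*$, and let $c'$ be any other filling of $\gamma$. Choose a bijection $f\colon\memblock\to\memblock$ that fixes every attacker block and permutes $B_v$ so that it carries the cached victim blocks of $c$ to those of $c'$ (and the remaining victim blocks bijectively among themselves); one checks directly that $c\circ f^{-1}=c'$. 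Because the empty state assigns age $\assoc$ to every block of $B_v$, we have $c_0\circ f^{-1}=c_0$, and the pointwise identity $\up_b(c)(b')=\up_{f(b)}(c\circ f^{-1})(f(b'))$ from the proof of Lemma~\ref{lem:dataindep} iterates to $\up_{f(b_1)\cdots f(b_n)}(c_0)=\up_t(c_0)\circ f^{-1}=c'$. Since $f(t)\in B_v^*$ as well, $c'$ is reachable. As $f$ ranges over all permutations of $B_v$, the images of the blocks in the placeholder positions range over all ordered $k$-tuples of distinct victim blocks, so every one of the $\frac{\footprint!}{(\footprint-k)!}$ fillings is reached; distinct fillings are distinct states, so none is counted twice. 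Hence the fiber size is exactly as claimed.

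Combining the two counts, each reachable configuration with $k$ placeholders contributes exactly $\frac{\footprint!}{(\footprint-k)!}$ reachable states, and by definition there are $\Lambda_M(k,\assoc)$ such configurations, yielding
\[
\ABS_\emptyc(M,\footprint)=\sum_{k=0}^{\min\{\footprint,\assoc\}}\Lambda_M(k,\assoc)\,\frac{\footprint!}{(\footprint-k)!}.
\]
The only loose end worth a remark is that $\Lambda_M(k,\assoc)$ is written without reference to $\footprint$: the set of reachable configurations with $k$ placeholders is invariant under any bijection of $B_v$ by the same data-independence argument, and it depends on $\footprint$ only through the constraint $\footprint\ge k$ that already delimits the sum, which justifies the notation.
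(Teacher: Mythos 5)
Your proof is correct and takes essentially the same route as the paper, which presents the proposition as an immediate consequence of the two-step procedure described just before it: partition the reachable states by configuration, count $\Lambda_M(k,\assoc)$ reachable configurations, and multiply each by the $\frac{\footprint!}{(\footprint-k)!}$ ways of filling its placeholders. The only point where you go beyond the paper is in rigorously establishing, via the bijection argument of Lemma~\ref{lem:dataindep} applied to permutations of $B_v$ that fix the attacker's blocks, that every filling of a reachable configuration is itself reachable --- a step the paper states as an observation without proof.
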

\begin{lemma}
For LRU and FIFO, $\Lambda_M(k,\assoc)=1$ for any number of placeholders~$k$ and associativity $A$. 
For PLRU, $\Lambda_{M_\PLRU}(k,\assoc)$ is given by:
\begin{equation}
\label{eq:Lambda}
\Lambda_{M_\PLRU}(k,\assoc)=2 \cdot
\sum\limits_{i=\max\{1,k-\frac{\assoc}{2}\}}^{\min\{\frac{\assoc}{2},k-1\}} 
\Lambda_{M_\PLRU}(i,\tfrac{\assoc}{2})\cdot\Lambda_{M_\PLRU}(k-i,\tfrac{\assoc}{2}),
\end{equation}
if $1<k<A$ and $\Lambda_{M_\PLRU}(k,\assoc)=1$ if $k \leq 1$ or $k =\assoc$.
\end{lemma}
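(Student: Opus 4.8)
The plan is to handle the two policy families separately, reserving an induction on the associativity for the harder PLRU case.

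For LRU and FIFO I would invoke the structural fact established just before the statement: starting from the empty, attacker-filled state $[x_0,\dots,x_{\assoc-1}]$, every sequence of victim accesses that keeps exactly $k<\assoc$ distinct victim blocks resident yields a state of the form $[\_,\dots,\_,x_0,\dots,x_{\assoc-1-k}]$, i.e.\ the $k$ placeholders occupy the youngest ages and the surviving attacker blocks keep their original relative order at the oldest ages. This shape is forced: FIFO never reorders on hits, and LRU only ever promotes victim blocks, so attacker blocks keep their mutual order and are evicted strictly oldest-first, never brought forward. Hence the reachable configuration with $k$ placeholders is unique and $\Lambda_M(k,\assoc)=1$ for every $0\le k\le\assoc$ (with $k=0$ the untouched initial state and $k=\assoc$ the all-placeholder state).

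For PLRU I would argue by induction on $\assoc=2^n$ using the tree representation, whose root has two subtrees, each a PLRU set of associativity $\assoc/2$. The base case $\assoc=2$ is immediate since the only placeholder counts are $k\in\{0,1,2\}$, all covered by the ``$k\le 1$ or $k=\assoc$'' clause. For general $\assoc$ the boundary counts again give $\Lambda_{M_\PLRU}(k,\assoc)=1$: $k=0$ is untouched; for $k=1$ the lone victim block always lands in the fixed leaf selected by the initial arrows and re-accesses do not change the configuration, so it is unique; and $k=\assoc$ is the all-placeholder state. For the recursive range $1<k<\assoc$ I would establish a bijection between reachable configurations with $k$ placeholders and triples $(c_L,c_R,\rho)$, where $c_L$ is a reachable configuration of the left subtree with $i$ placeholders, $c_R$ one of the right subtree with $k-i$ placeholders, and $\rho\in\{\text{left},\text{right}\}$ the root-arrow orientation. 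Counting these triples and substituting the inductive hypothesis for $\Lambda_{M_\PLRU}(\cdot,\assoc/2)$ reproduces the formula exactly, the factor $2$ coming from $\rho$ and the summation bounds from $1\le i\le\assoc/2$ and $1\le k-i\le\assoc/2$.

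Three ingredients support this bijection. \emph{Independence:} an access only flips the arrows on the root-to-leaf path, touching at most the root arrow and the internal arrows and leaves of a single subtree; hence the two subtrees evolve independently up to the shared root arrow, and projecting any access sequence onto the accesses routed into a given subtree realizes a valid isolated PLRU run there, which gives the forward inclusion and shows each $c_L,c_R$ is reachable in a half-size tree. \emph{Balance:} for $k\ge 2$ both subtrees must hold a victim block, because the first victim miss flips the root arrow away from the subtree it entered, so a second miss is routed to the opposite subtree unless that subtree is first accessed --- impossible while it contains only attacker blocks, which the victim cannot touch; this forces $i\ge 1$ and $k-i\ge 1$, matching the summation bounds. \emph{Root freedom and reachability of every triple:} this is the step I expect to be the main obstacle. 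The key device is that re-accessing the most-recently-used victim block of a subtree is idempotent on all arrows except that it sets the root arrow to point away from that subtree --- idempotent because a subtree's internal arrows are modified only by accesses routed into it, so the last block accessed there still has its whole internal path pointing away from it regardless of activity in the sibling. Using this device I would (i) build $c_R$ by replaying its isolated access sequence, prefixing each right miss with a harmless re-access of the left MRU to route the miss; (ii) build $c_L$ likewise, using right-MRU re-accesses that leave $c_R$ intact; and (iii) fix $\rho$ with one final MRU re-access on the appropriate subtree. Verifying that this reproduces both isolated runs without cross-contamination and attains both root orientations for every $(c_L,c_R)$ is the crux; the remaining counting and the distinctness of the resulting configurations are then routine.
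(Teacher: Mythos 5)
Your proposal is correct and takes essentially the same route as the paper's proof: uniqueness of the single configuration for LRU and FIFO, and for PLRU the decomposition at the root into the two half-trees, the balance constraint of at least $1$ and at most $\assoc/2$ placeholders per child, recursive counting via $\Lambda_{M_\PLRU}(\cdot,\tfrac{\assoc}{2})$, and the factor $2$ from the orientation of the root arrow. The only difference is one of rigor: where the paper obtains the factor $2$ by letting the sequence ``input repetitions of the blocks stored in the cache,'' your observation that only re-accesses of a subtree's most-recently-used block are idempotent on that subtree's internal arrows is precisely the refinement needed to make that step --- and the simultaneous reachability of every triple $(c_L,c_R,\rho)$, which the paper leaves implicit --- fully precise.
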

\begin{Proof}
For LRU and FIFO, since they fill the cache placing $k$ victim's blocks in the $k$ youngest ages, there is only one possible configuration for each value of $k$.

For PLRU we distinguish the three cases of $k$. If $k= 1$, there is only one block which is mapped to age 0. Repetitions of the same block do not modify the ages and so this is the only reachable configuration.

If $k =\assoc$, the state is completely filled with placeholders and so it is the only reachable configuration.

If $1<k<A$, we use the representation of PLRU caches as trees with blocks on the leaves and arrows on the non-leaf nodes. We consider that we input any sequence of blocks that ends up with exactly $k$ victim's blocks in the cache and study in which leaves these blocks can be. Note that we do not require the sequence to have exactly $k$ different blocks but rather at least $k$. Extra blocks may evict other victim's blocks and still end up with $k$ blocks in the cache.

We base the proof on the behavior of the root of the tree, its arrow and its two children. We study how different sequences of inputs affect them and use it to explain the elements of \eqref{eq:Lambda}. Since $\Lambda_{M_\PLRU}$ is constructed recursively, applying it to a child is equivalent to considering the child as tree on its own.

We first prove that any reachable state with exactly $k$ victim's blocks (and consequently placeholders) has at least 1 placeholder in each child and at most $\assoc/2$. The upper bound is trivial since it is the number of leaves in the child. For the lower bound consider the root of the tree. In the initial state the arrow points to one child. This child stores the first input $b_1$ of any sequence of victim's blocks, after which the arrow shifts to the other child. After this the sequence of inputs may have repetitions of $b_1$, which have no effect on the state, before inputting a new block $b_2$. Therefore, the child that does not store $b_1$ always stores $b_2$. 
This bounds can actually be reached by the sequence $b_1b_2b_1b_3b_1\ldots b_1b_k$. The repetitions of $b_1$ make the arrow always point to the same child  before inputting new blocks and so the blocks $b_2,b_3,\ldots,b_k$ are stored in the same child, with $b_1$ alone in the other.

Then, the number of configurations with $k$ placeholders depends on how many ways we can distribute them in the two children, constrained to at least 1 per child and at most $\assoc/2$. This corresponds to the limits of the sum in \eqref{eq:Lambda}; for each distribution of placeholders, we compute $\Lambda_{M_\PLRU}$ restricted to each child.

Finally, once a sequence has stored $k$ victim's blocks in the state, the distribution of placeholders is fixed. However, the sequence may input repetitions of the blocks stored in the cache, which  does not modify the number of placeholders but affects the arrow on the root, that shifts from one child to the other. Therefore, every distribution of placeholders accounts for two configurations, which produces the duplication in \eqref{eq:Lambda}.
\end{Proof}

\subsubsection{Comparison of Absorption}

Let us compare the absorption of LRU, FIFO, and PLRU based
on Propositions~\ref{prop:first} to \ref{prop:last}, for a cache set of
associativity 4. Similar results can be obtained for any associativity. The results depicted in Figure~\ref{fig:abs} can be obtained both from 
the formulas above or by simulation of caches.  We
highlight the following observations.
\begin{compactitem}
\item For each replacement policy, the absorbed information grows
  monotonically with the footprint, as expected.
\item The absorption for an empty initial state  is always larger than
  for a filled state.
\item For a filled initial state, LRU absorbs always at least as much
  information as the other replacement
  policies since every state is always reachable. For large enough footprints, the
  absorption coincides for all policies.
\item For an empty initial state
  PLRU absorbs most. This is due to the fact that PLRU may
  leave ``holes'' in the cache state, see Example~\ref{ex:PLRU_states}.
\item For a filled initial cache, FIFO does not absorb any
  information, whenever the footprint is smaller than the
  associativity.  This captures the intuition that preloading of
  sensitive data can increase security, as long as all data fits into
  the cache. In case it does not, the positive effect of preloading is, however,
  quickly undone.
\end{compactitem}

\begin{figure*}
\centering
\begin{tabular}{cc}
\begin{tikzpicture}
\pgfplotsset{every axis legend/.append style={font=\footnotesize }}
\begin{axis}[ymin=-0.5,
legend columns=3,
legend entries={FIFO,LRU,PLRU},
legend to name=abs,
width=0.575\textwidth,
xtick={0,1,2,3,4,5,6,7},height=5cm]

\AddCurve{Figures/abs_filled_fifo.txt}{color1}{square}{solid}
\AddCurve{Figures/abs_filled_lru.txt}{color2}{o}{solid}
\AddCurve{Figures/abs_filled_plru.txt}{color3}{diamond}{solid} 
 
\end{axis}
\end{tikzpicture}\customlabel{fig:abs_filled}{\ref*{fig:abs}a}&

\begin{tikzpicture}
\pgfplotsset{every axis legend/.append style={font=\footnotesize }}
\begin{axis}[ymin=-0.5,
width=0.575\textwidth,
xtick={0,1,2,3,4,5,6},height=5cm]

\AddCurve{Figures/abs_empty_fifo.txt}{color1}{square}{solid}
\AddCurve{Figures/abs_empty_lru.txt}{color2}{o}{solid}
\AddCurve{Figures/abs_empty_plru.txt}{color3}{diamond}{solid} 
  
\end{axis}
\end{tikzpicture}\customlabel{fig:abs_empty}{\ref*{fig:abs}b}\\
(a) Filled initial cache.&(b) Empty initial cache.
\end{tabular}
\ref*{abs}
\caption{Information absorption of a 4-way
  cache set. Figure~\ref{fig:abs_filled} depicts the case of a filled
  initial cache, part~\ref{fig:abs_empty} an empty one.  In both
  figures, the horizontal axis depicts the footprint, i.e., the number
  of memory blocks used. The vertical axis depicts the absorbed
  information on a logarithmic scale, that is, in {\em bits}. Note that in Figure~\ref{fig:abs_empty}, the line for LRU and FIFO coincides.}
\label{fig:abs}
\end{figure*}
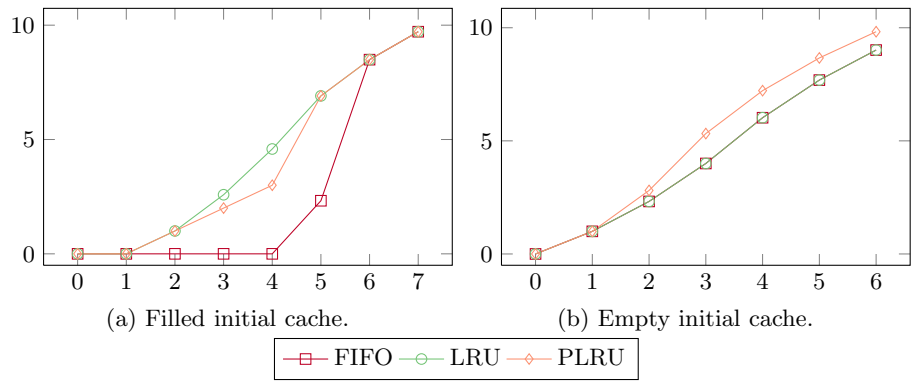

\section{Extraction of Information}
\label{sec:extraction}
In this section we characterize the information extraction for different cache replacement policies. That is, we characterize $\range{\randblock_a}$ from \eqref{eq:markov}. For this we develop a novel model that characterizes the information an adaptive attacker can learn about the initial state of a Mealy machine. We then use the model to derive bounds on the information that can be extracted from caches with different replacement policies.

\subsection{Probing Strategies}
Let $M=(\keys,\queries,\obs,\up,\view)$ be a Mealy machine. A {\em probe} $p$ of $M$ is an alternating sequence $p=\queriesval_1\obsval_1\queriesval_2\dots\queriesval_n\obsval_n$ of inputs $\queriesval_i\in\queries_a\subseteq\queries$ and observations $\obsval_i\in \obs$, such that $M$ outputs $\obsval_1\dots\obsval_i$ when the sequence $\queriesval_1\dots\queriesval_i$ is the input. We say that a state $s\in\keys$ is {\em coherent} with probe $p$ if, for all $i\in\{1,\dots,n\}$, we have
\begin{equation*} 
\view_{\queriesval_i}\up_{\queriesval_{i-1}}\cdots\up_{\queriesval_1}(s)=o_i\ , 
\end{equation*} 
i.e., the probe does not exclude $s$ as a potential initial state of $M$. Along the lines of~\cite{AskarovS07,koepfbasin07}, we define the adversary's {\em knowledge set} $K(p)$ about the initial state of $M$ as the subset of possible states that are coherent with probe $p$.  
\begin{equation*} 
K(p)=\{s\in \keys_v\mid s\text{ is coherent with } p\}
\end{equation*} 
For convenience, we also define the adversary's {\em final knowledge set} $FK(p)$ as the set of states that $M$ may be in after receiving the inputs and producing the outputs in the probe $p$:
\begin{equation*}
FK(p) = \{\up_{\queriesval_{n}}\cdots\up_{\queriesval_1}(s) \mid s \in K(p)\}
\end{equation*}

An adversary may be able to choose inputs based on previous observations, that is, the probing can be adaptive. To model adaptivity we introduce probing strategies. A {\em probing strategy} is a function from a sequence of observations to an input symbol, $\att:\obs^*\mapsto\queries_a$. This way, the first input to make comes from applying the function to the empty sequence, $\queriesval_1=\att(\varepsilon)$, the second input is a function of the previous observation, $\queriesval_2=\att(\obsval_1)$, and so, for any $i$ $\queriesval_i=\att(\obsval_1\ldots\obsval_{i-1})$.
We say that $p$ is a probe of $\att$, if $p$ may be obtained from the probing strategy~$\att$.

We now present a toy example that we will use through the section to illustrate the use of probing strategies.
\begin{example}\label{ex:probing_strategy}

Consider a Mealy machine where $\keys=\keys_v=\queries_a=\{0, 1, \ldots, 6\}$, the observation and transition function are:

$$\view_\queriesval(\keysval)=\begin{cases}
0&\text{if }\keysval<\queriesval-1,\\
2&\text{if }\keysval\in[\queriesval-1,\queriesval+1],\\
1&\text{if }\queriesval+1<\keysval.
\end{cases}\quad
\up_\queriesval(\keysval)=\begin{cases}
\keysval+1&\text{if }\keysval<\queriesval,\\
\keysval&\text{if } \keysval\in[\queriesval,\queriesval+1],\\
\keysval-1&\text{if }\queriesval+1<\keysval.
\end{cases}$$
Consider the probing strategy given by the function $\att(\obsval_1\ldots\obsval_n)=0+\sum_{i=1}^n\obsval_i$, which starts by inputting $0$ and determines the next input based on the previous outputs. We will later see that $\att$ is a good probing strategy in this example.
\end{example}

By definition, we can apply a probing strategy indefinitely on sequences of arbitrary length and thus probe the Mealy machine indefinitely. 
However, at some point additional inputs are of no use, as the following definition characterizes.
\begin{definition}
\label{def:indpart}
We say that a probe $p=\queriesval_1\obsval_1\queriesval_2\dots\queriesval_n\obsval_n$ of probing strategy $\att$ is \emph{depleted} w.r.t. to \att, if for all probes $q$ of $\att$ that are extensions of $p$, i.e., $q = p \queriesval_{n+1}\obsval_{n+1}\queriesval_{n+2}\dots\queriesval_m\obsval_m$, the knowledge sets are equal, i.e., $K(p) = K(q)$.
We say a depleted probe $p=\queriesval_1\obsval_1\queriesval_2\dots\queriesval_n\obsval_n$ is of \emph{minimal length} when, a probe $q$ made of a sub-sequence of it, $q=\queriesval_{k_1}\obsval_{k_1}\queriesval_{k_2}\dots\queriesval_{k_i}\obsval_{k_i}$ for any $i<n$, is not depleted.
\end{definition}

We next show that the knowledge sets of depleted probes of a probing strategy form a partition of the states of $M$. That is, the knowledge sets of distinct sequences are pairwise disjoint and their union contains all states.
\begin{proposition} 
Given a probing strategy $\att$, the set of all knowledge sets produced by depleted probes w.r.t. $\att$
$$\partition_\att=\{K(p)\mid \textit{probe } p=\att(\varepsilon)\obsval_1 \ldots \att(\obsval_1 \ldots \obsval_{n-1})\obsval_n \wedge  p \textit{ is depleted w.r.t. } \att\},$$
is a partition of the set of possible states $\keys_v$.
\end{proposition}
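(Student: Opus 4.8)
The plan is to exploit that, once the probing strategy $\att$ is fixed, the probe observed is a deterministic function of the initial state, so that the knowledge sets behave like the cells of a refining sequence of partitions. The key lemma I would establish first is the following \emph{rigidity} property: any two probes $p$ and $p'$ of $\att$ of the same length that are both coherent with a common state $s$ are in fact identical. This follows by induction on the step index $i$: the first input is $\att(\varepsilon)$ for both probes, and coherence with $s$ forces the first observation to be $\view_{\att(\varepsilon)}(s)$ for both; assuming the first $i-1$ input/observation pairs agree, the $i$-th input is $\att(\obsval_1\dots\obsval_{i-1})$ for both, and coherence with $s$ again pins the $i$-th observation to $\view_{\queriesval_i}(\up_{\queriesval_{i-1}}\cdots\up_{\queriesval_1}(s))$. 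Determinism of $\att$ (a function) and of $M$ (a deterministic Mealy machine) is exactly what makes this work.

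With rigidity in hand I would prove \emph{covering}, i.e. that every $s\in\keys_v$ lies in the knowledge set of some depleted probe. Running $\att$ from $s$ produces a unique infinite trajectory, whose length-$n$ prefixes $p_n$ give a chain of knowledge sets $K(p_1)\supseteq K(p_2)\supseteq\cdots$, all containing $s$, since each additional observation only adds a coherence constraint. As these are subsets of the finite set $\keys_v$, their sizes form a non-increasing sequence of non-negative integers and hence are eventually constant: there is $N$ with $K(p_n)=K(p_N)$ for all $n\ge N$. I would then argue that $p_N$ is depleted: every extension of $p_N$ that is a probe of $\att$ is, by rigidity, a prefix-extension along the same trajectory, because its observations must be those producible by the states in $K(p_N)$, which by stabilization all agree with $s$; hence its knowledge set equals $K(p_N)$. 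Thus $s\in K(p_N)$ with $p_N$ depleted, and since each depleted probe is realizable from at least one state of $\keys_v$, every cell of $\partition_\att$ is non-empty.

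For \emph{disjointness} I would show that two depleted probes with intersecting knowledge sets have equal knowledge sets. Let $s\in K(p)\cap K(p')$ and assume without loss of generality that $p$ is no longer than $p'$. The prefix of $p'$ of length $|p|$ is a probe of $\att$ coherent with $s$, so by rigidity it equals $p$; hence $p$ is a prefix of $p'$, i.e. $p'$ is an extension of $p$ that is a probe of $\att$. Since $p$ is depleted, the defining property of depletion yields $K(p')=K(p)$. Consequently distinct cells of $\partition_\att$ are pairwise disjoint, and together with covering this shows that $\partition_\att$ is a partition of $\keys_v$.

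The step I expect to be the main obstacle is the precise justification that $p_N$ is depleted, because depletion quantifies over \emph{all} extensions of $p_N$, i.e. all observation continuations consistent with $\att$, and not merely over the single continuation that $s$ itself generates. The subtlety is resolved by combining rigidity with stabilization: once $K(p_n)=K(p_N)$ for all $n\ge N$, all states in $K(p_N)$ produce the same observations as $s$ at every future step, so there is only one realizable continuation and it does not refine the knowledge set. A secondary point worth stating carefully is that stabilization of the size sequence at one index genuinely implies $K(p_n)=K(p_N)$ for all later $n$, which follows from the chain being decreasing together with the integer-valued sizes being bounded below.
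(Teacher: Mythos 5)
Your proof is correct, but it takes a genuinely different route from the paper's. The paper argues globally and level-by-level: it defines $\partition_\att(i)$ as the collection of knowledge sets of \emph{all} probes of $\att$ of length $i$, shows by induction on $i$ that each $\partition_\att(i)$ is a partition of $\keys_v$ (each level refining the previous one, since the possible next observations split each knowledge set), and then identifies $\partition_\att$ with $\partition_\att(n)$ for $n$ the length of the longest depleted probe of minimal length. You instead argue pointwise per initial state: your rigidity lemma isolates exactly where determinism of $\att$ and of $M$ is used (each state generates a unique trajectory under $\att$), covering follows from stabilization of the nested chain of knowledge sets along that trajectory, and disjointness follows from rigidity plus the definition of depletion. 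The trade-off: the paper's induction is more compact and exhibits the refinement structure that is reused elsewhere (e.g., in the example and the algorithm), but it leaves implicit precisely the finiteness facts your proof supplies --- that depleted probes exist at all, that there is a longest depleted probe of minimal length, and that every probe of that length is depleted; these are exactly your stabilization argument, so your version is more self-contained on the existence side. Note that both proofs rest on the same reading of ``probe of $\att$'': the observation sequence must be realizable from some state in $\keys_v$ (this is what the paper encodes by restricting to $\obsval_{i+1}\in\view_{\queriesval_{i+1}}(FK(p))$ in its induction step, and what you invoke when you say the observations of an extension of $p_N$ ``must be those producible by the states in $K(p_N)$''); under any laxer reading, extensions with empty knowledge sets would prevent any probe from being depleted.
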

\begin{Proof} 
We will first prove the following related statement:
Let $\partition_\att(i)$ be defined as follows:
$\partition_\att(i)=\{K(p) \mid \textit{probe } p=\att(\varepsilon)\obsval_1 \ldots \att(\obsval_1 \ldots \obsval_{i-1})\obsval_i\}$.
Then $\partition_\att(i)$ is a partition of $\keys_v$ for all $i$.

We prove this statement by induction on $i$.\\
Induction basis: For $i=0$, $\partition_\att(i) = \{K(\epsilon)\}$, and $K(\epsilon) = \keys_v$. So $\partition_\att(i)$ is a trivial partition of $\keys_v$.\\
Induction step: For $i+1$, $\partition_\att(i+1) = \{K(p\sigma_{i+1}\obsval_{i+1}) \mid \textit{probe } p=\att(\varepsilon)\obsval_1 \ldots$ $\att(\obsval_1 \ldots \obsval_{i-1})\obsval_{i} \wedge \sigma_{i+1} =\att(\obsval_1 \ldots \obsval_{i}) \wedge \obsval_{i+1} \in \view_{\sigma_{i+1}}(FK(p))  \}$.
By induction hypothesis $\partition_\att(i)$ is a partition of $\keys_v$.
It is easy to see that for each probe $p$ of length $i$, $K(p)$ is partitioned by $\{K(p\sigma_{i+1}\obsval_{i+1}) \mid \sigma_{i+1} =\att(\obsval_1 \ldots \obsval_{i}) \wedge \obsval_{i+1} \in \view_{\sigma_{i+1}}(FK(p))\}$.
So $\partition_\att(i+1)$ is a refinement of $\partition_\att(i)$ and thus also a partition of $\keys_v$.

Let $n$ be the length of the longest depleted probe of minimal length.
Then $\partition_\att = \partition_\att(n)$ as all probes considered in $\partition_\att(n)$ must be depleted, and as extensions of depleted probes have the same knowledge set as their corresponding depleted probes of minimal length. 
\end{Proof}

Before starting the probing, the attacker knows that the victim's state is an element of the set $\keys_v$. As he makes inputs and refines the knowledge sets, he reduces the number of coherent states and thus learns information about the victim's initial state.
As depleted probes correspond to unrefinable knowledge sets, there is no point in further queries once a probe is depleted.

When constructing a strategy, the attacker needs to consider all the possible outputs that he might observe when eventually applying his strategy. 
Once all the knowledge sets obtained from an attack strategy cannot be further refined by additional queries, the probes are depleted and the attacker has along the way obtained the finest partition of the set $\keys_v$ under that strategy and all possible extensions.

\begin{table}
\setlength{\tabcolsep}{7pt} 
\centering \footnotesize
\begin{tabular}{|ccccccccccccc|} \hline
\quad 0/0&&1/1&\textbf{0}&2/2&&3/3&&4/4&&5/5&&6/6 \quad\ \\
\hline
\quad 0/0&\textbf{2}&1/1&\vline &2/1&&3/2&\textbf{1}&4/3&&5/4&&6/5 \quad\ \\
\hline
\quad 0/1&\vline &1/2&\vline &2/1&\textbf{3}&3/2&\vline &4/2&&5/3&\textbf{2}&6/4 \quad\ \\
\hline
\quad 0/1&\vline &1/2&\vline &2/2&\vline &3/3&\vline &4/2&\textbf{4}&5/3&\vline &6/3 \quad\ \\
\hline
\quad 0/1&\vline &1/2&\vline &2/2&\vline &3/3&\vline &4/3&\vline &5/4&\vline &6/3 \quad\ \\
\hline
\end{tabular}\\
\vspace{1ex} \caption{Partition from Example~\ref{ex:partition}. } \label{tab:partition}
\end{table}

\begin{example} \label{ex:partition} 
Following Example~\ref{ex:probing_strategy} we apply the probing strategy to the set of possible states and obtain the partition shown in Table~\ref{tab:partition}. Each row shows the knowledge sets before and after the elements are updated (left and right, respectively). The first row shows the initial knowledge set, i.e., $\keys_v$. The bold face 0 indicates the first input symbol, which partitions the initial knowledge set into two knowledge sets, corresponding to the two possible outputs of the Mealy machine on the input $0$.
For each resulting knowledge set, except for the singleton ones where the probes are depleted, the figure then indicates the next input following the probing strategy and how it partitions its knowledge set.  After at most four inputs we obtain a partition of all singleton knowledge sets.
\end{example}

For every attack strategy there is a finite set of depleted probes of minimal length. 
We define $\randblock_a=\randblock_\att$ from \eqref{eq:markov} as the random variable that captures the sequence of observations obtained when following probing strategy $\att$ until obtaining a depleted probe of minimal length.
So $\range{\randblock_\att}\subseteq\obs^*$  is the set of sequences of observations obtained from the depleted probes of minimal length of $\att$. Every depleted probe corresponds to a knowledge set; so we can relate every element of $\range{\randblock_\att}$ to a knowledge set. Therefore, computing $|\range{\randblock_\att}|$ is equivalent to counting the number of knowledge sets in the partition induced by the strategy $\att$.

\begin{definition}\label{def:optstrat}
We say that a strategy $\att$ is {\em optimal} if the
partition $\partition_\att$ it induces on a set of possible states $\keys_v$, has the maximal number of knowledge sets among all
strategies. We call this number $\maxleak$ the \emph{maximum information leakage}.
\end{definition}

The strategy presented in Example~\ref{ex:probing_strategy} is actually optimal since no partition can be better than the one that produces singleton knowledge sets. On the other hand, the strategy $\att(\obsval_1\ldots\obsval_n)=1+\sum_{i=1}^n\obsval_i$ is not optimal since the first input, $1$, is not able to distinguish the initial states $0$ and $1$, which are both updated to $1$ as a result of the input, $\up_1(0)=\up_1(1)=1$, and so they can not be distinguished by this strategy.

\subsection{Information Extraction in Caches}
\label{sec_extraction_apndx}
Here we derive bounds on the maximum information leakage for the three replacement policies. We prove bounds for LRU and FIFO based on the associativity of the cache and prove that for PLRU this bound depends also on the footprint.

\ifshowproof

\textbf{Notation.} Given a set of cache states $\cache$ we use the following shortcuts: $\cache(b)=\{c(b)\mid c\in\cache\}$, $\up_{b_0\ldots b_{n-1}}(\cache)=\{\up_{b_0\ldots b_{n-1}}(c)\mid c\in\cache\}$ and $\view_b(\cache)=\{\view_b(c)\mid c\in\cache\}$.

\begin{definition}
\label{def:deterministic}
We say that a set of cache states $\cache$ has $n\leq\assoc$ \emph{deterministic ages} if all the states in $\cache$ have the same $n$ youngest blocks. That is, if there exists a sequence of blocks $a_0\ldots a_{n-1}$ such that $\cache(a_i)=\{i\}$ for all $i\leq n-1$.
\end{definition}

\begin{lemma}
\label{cor:deterministic_depleted}
Let $\cache$ be a set of cache states of associativity $\assoc$ and let $p$ be a probe. We have that $p$ is depleted if and only if  $FK(p)$ has $\assoc$ deterministic ages.
\end{lemma}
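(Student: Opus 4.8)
The plan is to reduce the claim to the single statement that $FK(p)$ is a singleton, and then to show that two distinct cache states can always be told apart by a suitable access sequence. Throughout I read ``depleted'' in its strategy-independent sense: $p$ is depleted iff no continuation of $p$ (by any further inputs) ever refines $K(p)$. This is the only reading that makes the equivalence true, since a single ill-chosen input can merge two distinct coherent states without separating them, so one must quantify over the \emph{existence} of a refining continuation rather than over one fixed $\att$.

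First I would record the reduction ``$FK(p)$ has $\assoc$ deterministic ages $\iff$ $FK(p)$ is a singleton''. By the attacker model, gaps in a victim state are filled with blocks of $B_a\setminus B_v$, so every state in $FK(p)$ is a \emph{full} cache state, occupying all ages $0,\dots,\assoc-1$. If $FK(p)$ has $\assoc$ deterministic ages, witnessed by $a_0\dots a_{\assoc-1}$ with $FK(p)(a_i)=\{i\}$, then in every state each $a_i$ has age $i$ and, the cache being full, every other block has age $\assoc$; hence the entire age function is pinned down and $FK(p)$ is a singleton. Conversely, a single full state trivially exhibits $\assoc$ deterministic ages. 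It therefore suffices to prove: $p$ is depleted $\iff$ $FK(p)$ is a singleton.

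The ``if'' direction is immediate: if $FK(p)=\{c\}$ then the probe has driven every initial state of $K(p)$ to the same current state $c$, so any continuation yields one and the same hit/miss sequence no matter which element of $K(p)$ we started from. Hence no extension can exclude a state, $K(q)=K(p)$ for every extension $q$, and $p$ is depleted. For the ``only if'' direction I argue the contrapositive. If $FK(p)$ is not a singleton, choose $c_1\neq c_2$ in $FK(p)$ and corresponding (necessarily distinct) initial states $s_1,s_2\in K(p)$ driven to $c_1,c_2$. The key is the following \emph{distinguishing claim}: any two distinct full cache states admit an access sequence $w$ producing different observations. Appending such a $w$ to $p$ then separates $s_1$ from $s_2$ into different knowledge sets, so $K(p)$ is refined and $p$ is not depleted.

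The distinguishing claim is the main technical step. If $c_1$ and $c_2$ cache different sets, a single access to a block in their symmetric difference is a hit in one and a miss in the other. If they cache the same set in different arrangements, I use that in the age model of \eqref{eq:cache_update} a \emph{miss} always evicts the age-$(\assoc-1)$ block and shifts the younger ones up, \emph{independently of the policy}: the hit permutations $\Pi_\alpha$ of FIFO, LRU and PLRU never enter. Consequently, accessing distinct fresh blocks evicts the cached blocks of either state in strictly decreasing order of age, uniformly for all three policies. Taking $j$ to be the largest age at which the two arrangements disagree and accessing $\assoc-j$ fresh blocks, both evolutions evict the same blocks above age $j$ but different blocks $u=c_1^{-1}(j)\neq w=c_2^{-1}(j)$ at age $j$; a short computation shows that $u$, having age below $j$ in $c_2$, survives all these evictions from $c_2$ while being freshly evicted from $c_1$, so a final access to $u$ gives a miss from $c_1$ and a hit from $c_2$. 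The obstacle here is less any single step than getting three things aligned: that the eviction order is genuinely policy-uniform (so PLRU is not a special case for \emph{misses}, even though Example~\ref{ex:PLRU_states} shows its \emph{hits} behave differently), the bookkeeping that the discriminating block survives in exactly one evolution, and the standing assumption that the adversary's input alphabet $\queries_a$ is rich enough to access the blocks used both for priming and for the final probe.
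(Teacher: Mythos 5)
Your proof is correct, and at the skeleton level it follows the same reduction as the paper's own proof: identify ``$\assoc$ deterministic ages'' with $\sizeof{FK(p)}=1$, and identify that singleton property with depletion (the backward direction, singleton $\Rightarrow$ unrefinable $\Rightarrow$ depleted, is the same in both). The difference is how much of that skeleton gets proved. The paper's forward direction is the bare assertion ``since $p$ is depleted we have $\sizeof{FK(p)}=1$''; it never argues why a non-singleton final knowledge set can always be refined by further inputs. That is precisely the content of your distinguishing claim, and your argument for it is sound: different cached sets are separated by one access to a block in the symmetric difference, while equal sets in different arrangements are separated by priming with fresh blocks --- exploiting that misses evict in strictly decreasing age order uniformly across FIFO, LRU and PLRU by \eqref{eq:cache_update}, so the hit permutations $\Pi_\alpha$ never enter --- followed by a probe of the block $u$ at the deepest disagreeing age $j$; the bookkeeping checks out, since $u$ has age below $j$ in $c_2$ and hence survives the $\assoc-j$ evictions there while being evicted from $c_1$'s evolution. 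You are also more careful than the paper on two points it glosses over: (i) the lemma is only true under the strategy-independent reading of ``depleted'' (Definition~\ref{def:indpart} is relative to a strategy $\att$, and a never-refining strategy yields depleted probes with large $FK(p)$), and (ii) the step ``singleton $\Rightarrow$ $\assoc$ deterministic ages'' needs cache states to be full, which holds by the paper's convention that lines not holding victim blocks hold blocks from $B_a\setminus B_v$. Your final caveat about the richness of $\queries_a$ is equally real: for a disjoint-memory attacker the distinguishing claim fails (states differing only by a permutation of victim blocks among the same ages are indistinguishable), so the lemma must be read in the shared-memory setting of Proposition~\ref{pro:extraction_LRU}, which is where the paper applies it. In short: same route as the paper, but you supply the missing half of the argument; what your version buys is an actual proof of the direction the paper merely asserts, at the cost of the (necessary) alphabet assumption being made explicit.
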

\begin{proof}
Since $p$ is depleted we have that $\sizeof{FK(p)}=1$. Then $FK(p)$ trivially has $\assoc$ deterministic ages.

If $FK(p)$ has $\assoc$ deterministic ages, all the blocks are mapped to the same age for every state of $FK(p)$ (either the deterministic ages or age $\assoc$). This means that $\sizeof{(FK(p))}=1$ which implies that it is unrefinable and so $p$ is depleted. 
\end{proof}

\begin{lemma}
\label{lem:repetition}
Consider a set of cache states $\cache$ of associativity $\assoc$ that uses either LRU or FIFO. Inputting a block mapped to a deterministic age has no effect on the number of deterministic ages.
\end{lemma}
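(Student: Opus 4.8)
The plan is to exploit the fact that accessing a block sitting at a deterministic age acts \emph{uniformly} on all of $\cache$, so that the entire set is transported by a single bijection on ages. Suppose $\cache$ has $n$ deterministic ages, witnessed by blocks $a_0,\dots,a_{n-1}$ with $\cache(a_i)=\{i\}$, and that we input the block $b=a_j$ lying at the deterministic age $j\leq n-1$. Since $j<\assoc$, the access $\up_b$ is a hit in \emph{every} state $c\in\cache$, and moreover the base age $c(b)=j$ is the same across all of $\cache$. Hence, by the hit case of \eqref{eq:cache_update}, each new state is obtained by applying the \emph{fixed} age map $\widetilde\Pi_j$ to every block, where $\widetilde\Pi_j(\alpha)=\Pi_j(\alpha)$ for $\alpha<\assoc$ and $\widetilde\Pi_j(\assoc)=\assoc$; that is, $\up_b(c)=\widetilde\Pi_j\circ c$ for all $c\in\cache$.

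First I would record the structural property of the two permutation functions that drives the argument: for both policies $\Pi_j$ restricts to a bijection of $\{0,\dots,\assoc-1\}$ that \emph{fixes every age strictly larger than $j$} and permutes $\{0,\dots,j\}$ among themselves. For $\FIFO$ this is immediate from $\Pi^{\mathit{FIFO}}_j=\mathrm{id}$, see \eqref{eq:fifo_permutation}; for $\LRU$ it follows from \eqref{eq:lru_permutation}, where $\Pi^{\mathit{LRU}}_j$ is the cyclic rotation $0\mapsto1\mapsto\dots\mapsto j\mapsto0$ that leaves the ages above $j$ untouched. In particular $\widetilde\Pi_j$ is a bijection of $\ages$.

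Next I would translate uniformity into a statement about the determinism of individual ages. Because $\up_b$ post-composes every state with the single bijection $\widetilde\Pi_j$, an age $m<\assoc$ is deterministic in $\up_b(\cache)$ if and only if the age $\Pi_j^{-1}(m)$ is deterministic in $\cache$ (indeed the same block witnesses both, since $\up_b(c)(b')=m$ for all $c$ iff $c(b')=\Pi_j^{-1}(m)$ for all $c$). Now I would invoke $j\leq n-1$ twice: since $\Pi_j$ restricts to a bijection of $\{0,\dots,n-1\}$ onto itself, the ages $0,\dots,n-1$ remain deterministic after the access; and since $n>j$, $\Pi_j$ fixes age $n$, so age $n$ keeps its status and therefore remains non-deterministic (recall that $n$ being the number of deterministic ages means age $n$ was not deterministic to begin with). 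Thus $\up_b(\cache)$ again has exactly $n$ deterministic ages.

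The main obstacle — really the only delicate point — is the verification that both replacement policies share the structural property above, namely that $\Pi_j$ fixes all ages above the base age $j$ and merely shuffles the ages at or below it, and the recognition that this dovetails precisely with $j$ being a deterministic age, i.e. $j\leq n-1$: it is exactly this that guarantees that both the downward-closed block $\{0,\dots,n-1\}$ and the boundary age $n$ are preserved. The complementary step, that the access acts as one global age-bijection rather than state-by-state, hinges on the accessed block sitting at a deterministic (hence uniform) age; for $\FIFO$ this even degenerates to $\up_b(\cache)=\cache$.
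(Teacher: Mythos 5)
Your proof is correct and follows essentially the same route as the paper's: both arguments hinge on the access being a uniform hit (the accessed block sits at the same age $j\leq n-1$ in every state of $\cache$) and on the structural fact that $\Pi^{\mathit{LRU}}_j$ permutes $\{0,\dots,j\}$ while fixing all older ages (with $\Pi^{\mathit{FIFO}}_j$ the identity), so that the witnessing blocks remain at the $n$ youngest ages. Your packaging of the update as post-composition with a single fixed age bijection is a clean reformulation, and it additionally establishes the converse direction (the count cannot increase, since age $n$ is fixed by $\Pi_j$ and stays non-deterministic), which the paper's proof leaves implicit.
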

\begin{proof}
Assume that $\cache$ has $n$ deterministic ages. Then there exists a sequence of blocks $a_0\ldots a_{n-1}$ such that $\cache(a_i)=\{i\}$ for all $i\leq n-1$. Any input with a block $a_j\in\{a_0,\ldots ,a_{n-1}\}$ results in a hit. For LRU, following \eqref{eq:lru_permutation}, the new ages are:
$$\up_{a_j}(\cache)(a_i)=
\begin{cases}
  \{0\} & \text{if } i = j\\
  \{i + 1\} & \text{if } i < j\\
  \{i\} & \text{if } i>j
\end{cases} $$
so the blocks $a_0\ldots a_{n-1}$ are still mapped to the first $n$ ages for all states which results in $n$ deterministic ages. For FIFO, since hits do not reorder blocks \eqref{eq:fifo_permutation}, the conclusion is the same. 
\end{proof}

\begin{lemma}
\label{lem:deterministic}
Consider a set of cache states $\cache$ of associativity $\assoc$ and a sequence of $n$ inputs $b_0\ldots b_{n-1}$ with $b_i\neq b_j$ for any $i\neq j$.

If the cache  uses LRU, for any $i\leq n$, we have that $\up_{b_0\ldots b_{n}}(\cache)(b_i)=\{n-i\}$.

The same result holds for FIFO if each of the inputs results in a miss, i.e., $\up_{b_0\ldots b_{i-1}}(\cache)(b_i)=\{\assoc\}$ for all $i\leq n-1$.
\end{lemma}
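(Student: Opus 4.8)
The plan is to prove both statements by induction on the length of the access sequence, exploiting that the asserted equality $\up_{b_0\ldots b_n}(\cache)(b_i)=\{n-i\}$ is a \emph{determinism} claim: it suffices to show that for \emph{every} individual state $c\in\cache$ the accesses $b_0,\ldots,b_n$ force $b_i$ to age $n-i$, since then the image set $\up_{b_0\ldots b_n}(\cache)(b_i)$ collapses to the singleton $\{n-i\}$ independently of the starting configuration. I therefore fix an arbitrary $c\in\cache$ and prove the invariant that after inputting the distinct prefix $b_0\ldots b_m$, block $b_i$ has age $m-i$ for every $i\le m$. The base case $m=0$ is immediate: accessing $b_0$ makes it the youngest block, of age $0$. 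For LRU this holds in both branches of \eqref{eq:cache_update} (a miss sets the age to $0$, and the hit permutation \eqref{eq:lru_permutation} sends $\alpha'=\alpha$ to $0$); for FIFO it follows from the all-miss hypothesis, which inserts $b_0$ at age $0$.

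For the LRU inductive step I would use the uniform effect of an access captured by \eqref{eq:lru_permutation} together with the miss branch of \eqref{eq:cache_update}: accessing any block sets its age to $0$ and increments by one the age of every block that was strictly younger, leaving older blocks unchanged. By the induction hypothesis the previously accessed blocks $b_0,\ldots,b_{m-1}$ occupy exactly the ages $m-1,\ldots,0$. Since $b_m$ is distinct from all of them and, by the well-formedness constraint in the definition of $\keys$, distinct blocks cannot share a finite age, the current age of $b_m$ must be at least $m$ (or $\assoc$ if it is uncached). Hence each $b_i$ with $i\le m-1$ is strictly younger than $b_m$ and gets incremented from $(m-1)-i$ to $m-i$, while $b_m$ itself drops to $0=m-m$, restoring the invariant at length $m$. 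Crucially, this argument is identical in the hit and the miss cases, since in both the accessed block goes to $0$ and all younger blocks advance by exactly one.

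The FIFO case proceeds along the same lines, with the all-miss hypothesis $\up_{b_0\ldots b_{i-1}}(\cache)(b_i)=\{\assoc\}$ playing the role that recency plays for LRU: it guarantees that accessing each $b_m$ takes the miss branch of \eqref{eq:cache_update}, which inserts $b_m$ at age $0$ and increments every cached block. By the induction hypothesis $b_0,\ldots,b_{m-1}$ are cached at ages below $\assoc$, so they again shift from $(m-1)-i$ to $m-i$, yielding the same conclusion. I expect the one delicate point — and the main obstacle — to be the LRU step: one must reason uniformly over hits and misses and justify, via distinctness and the constraint in the definition of $\keys$, that $b_m$ is genuinely older than all of $b_0,\ldots,b_{m-1}$ so that they all advance by precisely one. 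The miss hypothesis is also exactly what makes the FIFO statement fail without it, since FIFO does not reorder on hits \eqref{eq:fifo_permutation}: a hit to some $b_i$ would leave the age order reflecting insertion order rather than access order, breaking the formula.
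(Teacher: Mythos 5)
Your proof is correct and follows essentially the same route as the paper's: induction on the length of the access sequence, with the key step being that distinctness (via the injectivity constraint on cache states) forces the newly accessed block to be older than all previously accessed ones, so that the access sends it to age $0$ while incrementing each $b_i$ by exactly one, uniformly over hits and misses for LRU and under the all-miss hypothesis for FIFO. Your write-up is in fact slightly more explicit than the paper's about why the hit and miss branches of the update act identically on the tracked blocks, but the decomposition and the inductive invariant are the same.
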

\begin{proof}
We proceed by induction. When $n=0$, $\up_{b_0}(\cache)(b_0)={0}$; LRU places the block in the beginning for both hits \eqref{eq:lru_permutation} and misses \eqref{eq:cache_update} and FIFO does it for misses \eqref{eq:cache_update}, by assumption this is the case for all inputs.

We assume the hypothesis is true for $n$ and input $b_{n+1}$. This block is distinct from the previous ones so, by the induction hypothesis, $\up_{b_0\ldots b_{n}}(c)(b_{n+1})>n$ for any $c\in\cache$, i.e. $b_{n+1}$ is older than the previous. For FIFO its age is actually $\up_{b_0\ldots b_{n}}(\cache)(b_{n+1})=\{\assoc\}$. Then $b_{n+1}$ is placed at age zero, $\up_{b_0\ldots b_{n+1}}(\cache)(b_{n+1})=\{0\}$, and the others increase their ages by one, $\up_{b_0\ldots b_{n+1}}(c)(b_{i})=\up_{b_0\ldots b_{n}}(c)(b_{i})+1=n+1-i$ for all $c\in\cache$ and $i\leq n$. 
\end{proof}

Assume now that we have a set of cache states $\cache$ and we input a sequence $b_0\ldots b_{k-1}$ with the requirements given in Lemma~\ref{lem:deterministic}. Following this Lemma, $\up_{b_0\ldots b_{k-1}}(\cache)$ has $k$ deterministic ages with $a_i=b_{k-1-i}$ from Definition~\ref{def:deterministic}. Now consider that we extend the sequence of inputs with $b_{k}\ldots b_{m}$ with $b_i\in\{b_0,\ldots ,b_{k-1}\}$ for $k\leq i\leq m$, that is, with blocks already mapped to deterministic ages. Then following Lemma~\ref{lem:repetition} $\up_{b_0\ldots b_{m}}(\cache)$ still has $k$ deterministic ages.

If we continue to extend the sequence of inputs with new blocks (provided they produce misses for FIFO) we will produce extra deterministic ages on the updated set of states. If we extend with blocks already mapped to deterministic ages, the number of deterministic ages is not modified.
\fi

We consider two types of attackers in terms of their set of memory blocks.
\begin{compactitem}
\item\emph{Shared memory attacker}. The  attacker's set of blocks includes the victim's ones, $ B_v\subset B_a$.
\item \emph{Disjoint memory attacker}. The sets of blocks of the attacker and the victim are disjoint $ B_v\cap B_a=\emptyset$. 
\end{compactitem}

\begin{proposition}
\label{pro:extraction_LRU}
Consider $M_\LRU$ and $M_\FIFO$ with associativity $\assoc$ and a shared memory attacker. The maximum information leakage on any set of states is bounded by $2^\assoc$ for $M_\LRU$ and by $(\assoc+1)!$ for $M_\FIFO$.
\end{proposition}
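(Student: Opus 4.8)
The plan is to bound, for a fixed probing strategy $\att$, the number of leaves of its decision tree --- equivalently, the number of knowledge sets in the partition $\partition_\att$ --- and then maximize over $\att$. This tree is binary, since each query produces an observation in $\obs=\{\hit,\miss\}$, so every internal node has at most two children. The crucial device is Lemma~\ref{cor:deterministic_depleted}: a probe $p$ is depleted exactly when $FK(p)$ has $\assoc$ deterministic ages. I would therefore use the number $d$ of deterministic ages of $FK(p)$ as a progress measure, starting at $d\geq 0$ at the root and terminating at $d=\assoc$.

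For $M_\LRU$ I would show that every \emph{productive} query (one whose knowledge set splits into both a $\hit$- and a $\miss$-branch) raises $d$ by exactly one in \emph{both} branches. Indeed, if $FK(p)$ has $d$ deterministic ages carried by blocks $a_0,\dots,a_{d-1}$ at ages $0,\dots,d-1$, then a queried block $b\notin\{a_0,\dots,a_{d-1}\}$ has age $\geq d$ in every coherent state. By the LRU permutation \eqref{eq:lru_permutation} (accessed block to age $0$, strictly younger blocks aged by one, older blocks unchanged), after the access $b$ is pinned at age $0$ and $a_0,\dots,a_{d-1}$ shift to ages $1,\dots,d$; this holds identically on the $\hit$-branch (where $b$ had age in $[d,\assoc)$) and on the $\miss$-branch (where $b$ had age $\assoc$), using also \eqref{eq:cache_update}, while repeated accesses to already-pinned blocks are harmless by Lemma~\ref{lem:repetition}. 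Since the number of deterministic ages never decreases and never exceeds $\assoc$, whereas every two-way split strictly increases it, each root-to-leaf path contains at most $\assoc$ binary branchings; the tree therefore has at most $2^\assoc$ leaves, giving $\maxleak\leq 2^\assoc$.

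For $M_\FIFO$ the same measure does not close the argument, because FIFO uses the identity permutation on hits \eqref{eq:fifo_permutation}: a $\hit$-branch leaves the state untouched and does \emph{not} raise $d$, so only $\miss$-branches make progress and the ``depth $\leq\assoc$'' bound fails. Instead I would set up a recursion on the number $m=\assoc-d$ of undetermined ages. Writing $g(m)$ for the maximal number of leaves reachable from a node whose final knowledge set has $m$ undetermined ages, I would prove $g(0)=1$ and $g(m)\leq (m+1)\,g(m-1)$, which unfolds to $g(\assoc)\leq(\assoc+1)!$ and hence $\maxleak\leq(\assoc+1)!$. The factor $m+1$ comes from resolving one further age in FIFO's eviction order: using Lemma~\ref{lem:deterministic} (valid for FIFO on misses), a block pushed in by a miss is pinned, and the identity of the next block to be fixed is one of at most $m$ candidates still residing in the undetermined region, or else that region is already empty, for at most $m+1$ distinguishable outcomes; each outcome reduces the undetermined region to $m-1$ ages, whose subtree has at most $g(m-1)$ leaves.

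The main obstacle is precisely this FIFO branching bound. The delicate points are that a miss evicts the oldest block and thereby \emph{merges} states that differed only in that position (which is what keeps the number of distinguishable outcomes finite at all), and that ``resolving one age'' is realized not by a single query but by a short sequence of queries whose only net effect on the partition is an at-most-$(m+1)$-way split. Formalizing this faithfully --- showing that the candidate set for the next pinned block never exceeds $m$, and that Lemma~\ref{lem:repetition} makes the intervening accesses to already-pinned blocks inert --- is where the real work lies. The $M_\LRU$ case, by contrast, is essentially a clean potential argument once the per-query increment of deterministic ages is established.
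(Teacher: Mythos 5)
Your proposal is correct and takes essentially the same approach as the paper's proof: both bound LRU by observing that only queries to blocks without a deterministic age can split a knowledge set (each such query pins one more age, on both the hit and miss branch, so at most $\assoc$ binary splits occur along any path), and both bound FIFO by the phase structure between misses, in which a phase with $m$ undetermined ages admits at most $m+1$ distinguishable outcomes, yielding the product $(\assoc+1)\assoc\cdots 2=(\assoc+1)!$. Your recursion $g(m)\leq (m+1)\,g(m-1)$ is the paper's iterative count in recursive form, with the one clarification that the all-hit outcome of a phase reaches $m-1$ undetermined ages only after a forced, uniformly-missing (hence non-splitting) query --- the same step the paper leaves implicit.
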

\begin{Proof}
\emph{LRU. }
Consider any given strategy $\att$ and any probe produced by $\att$, $p=b_1o_1\ldots b_no_n$ with $n\geq \assoc$ and $|\{b_1,\ldots ,b_n\}|=\assoc$. Following Lemmas~\ref{lem:repetition} and~\ref{lem:deterministic}, since $b_1\ldots b_n$ is formed with alternating sub-sequences of new blocks and repetitions of them, $FK(p)=\up_{b_1\ldots b_{n}}(K(p))$ has $\assoc$ deterministic ages which, by Lemma~\ref{cor:deterministic_depleted}, means that $p$ is depleted. Then any probe is depleted w.r.t. $\att$ if it has $\assoc$ different inputs.

We now prove that, for any probe given by $\att$, repetitions of inputs do not partition the knowledge sets. Given the non-depleted probe $p=b_1o_1\ldots b_no_n$, for any value of $n$, we have that a new input $b$ produces $\sizeof{\view_{b}(FK(p))}=1$ if $b= b_i$ for some $i\leq n$ since $b_i$ is mapped to a deterministic age. For any other input, the view function is trivially bounded by 2, $\sizeof{\view_{b}(FK(p))}\leq 2$.

Given the set of possible states $\keys_v$, the first input given by $\att$ will produce at most two knowledge sets. For each of these knowledge sets, the second input will partition them into two knowledge sets, making a total of up to four knowledge sets, unless it is a repetition, in which case there is no partition. We can partition the knowledge sets further until the probes are depleted, which happens after $\assoc$ different inputs. Each not repeated input at most doubles the amount of knowledge sets so, after $\assoc$ inputs the strategy produces up to $2^\assoc$ knowledge sets. 

\emph{FIFO. }
Consider any given strategy $\att$ and any probe produced by $\att$, $p=b_1o_1\ldots b_no_n$ with $n\geq \assoc$ and $\assoc$ misses. Following Lemmas~\ref{lem:repetition} and \ref{lem:deterministic}, since $b_1\ldots b_n$ is formed with alternating sub-sequences of new blocks and repetitions of them, $FK(p)$ has $\assoc$ deterministic ages which, by Lemma~\ref{cor:deterministic_depleted}, means that $p$ is depleted. Then any probe is depleted w.r.t. $\att$ if it has $\assoc$ misses.

Consider a non-depleted probe $p=b_1o_1\ldots b_no_n$, for any value of $n$, that we extend with an input $b$ such that $\view_b(FK(p))=\{\hit\}$. Then $FK(pb\hit)=FK(p)$, that is, obtaining a hit for all the states has no effect on the partition. An input $b$ can return a hit for all the states in a final knowledge set in two cases:
\begin{compactenum}
\item The probe, starting from the last miss, is one of the form $p=b_1\miss b_2\hit\ldots b_n\hit b\hit$ with $b=b_i$ for some $i\leq n$.
\item The input $b$ is mapped to a deterministic age.
\end{compactenum}

We now have a characterization of how depleted probes of minimal length look like. First, all depleted probes have $\assoc$ misses but may have a different number of hits between each miss. Second, in order for the probes to be of minimal length, their inputs do not return a hit for all the states in the current final knowledge set. 

Since new misses introduce new deterministic ages, the maximum number of consecutive non-repetitive hits is reduced by one with each miss. Before any miss, $\att$ can produce up to $\assoc$ hits with these restrictions, which gives up to $\assoc+1$ different probes. After the first miss, $\att$ extends each probe with up to $\assoc-1$ consecutive hits which makes up to $(\assoc+1)\assoc$ different probes. In the end, any attack strategy can produce up to $(\assoc+1)!$ depleted probes of minimal length and the same amount of knowledge sets.
\end{Proof}

\begin{proposition}
\label{pro:extraction_PLRU}
Consider $M_\PLRU$ with associativity $\assoc\geq 4$\footnote{Note that for associativity $2$, \PLRU\ and \LRU\ coincide.} and a shared memory attacker. Let $\maxleak(\footprint)$ be the maximum information leakage obtained with a given footprint $\footprint \geq \assoc$. It holds that $\maxleak(\footprint+1)\geq\maxleak(\footprint)+1$.
\end{proposition}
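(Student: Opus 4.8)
The plan is to prove the inequality by exhibiting, for footprint $\footprint+1$, a concrete probing strategy that produces at least one more knowledge set than an optimal strategy for footprint $\footprint$. Write $B_v=\{b_1,\dots,b_\footprint\}$ for a set of victim blocks realizing footprint $\footprint$, let $\att$ be an \emph{optimal} strategy for this footprint, so that $\partition_\att$ has exactly $\maxleak(\footprint)$ blocks, and choose a \emph{fresh} block $b_{\footprint+1}$ that does not occur among the (finitely many) inputs used by $\att$. Setting $B_v'=B_v\cup\{b_{\footprint+1}\}$ realizes footprint $\footprint+1$, and giving $b_{\footprint+1}$ also to the attacker keeps the setup a shared memory attack. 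By construction $\att$ never inputs $b_{\footprint+1}$, so $b_{\footprint+1}$ stays uncached in every victim state reachable with $B_v$; this is the lever we exploit.

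First I would establish non-strict monotonicity for free. Every trace over $B_v$ is also a trace over $B_v'$, hence $\keys_v\subseteq\keys_v'$, and since $\att$'s inputs and the functions $\up,\view$ are unchanged, each of the $\maxleak(\footprint)$ observation sequences produced by the old states is still produced on the larger state set $\keys_v'$; distinct sequences remain distinct, so $\att$ (extended as needed to deplete on $\keys_v'$) yields at least $\maxleak(\footprint)$ knowledge sets, giving $\maxleak(\footprint+1)\ge\maxleak(\footprint)$. The whole content of the proposition is therefore to secure the extra $+1$.

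To gain that extra knowledge set I would exhibit a state $s^\ast\in\keys_v'$ that is reachable only with the larger footprint, that is coherent with one of the depleted probes $p$ of $\att$ (so $s^\ast$ reproduces the observations of some old state $s\in K(p)$ throughout $\att$), and for which $b_{\footprint+1}$ is still cached in $FK(p)$. Let $\att'$ be $\att$ extended by a single additional query of $b_{\footprint+1}$ on the branch reaching $K(p)$. Then $\view_{b_{\footprint+1}}$ returns $\miss$ on the old state $s$ (which never contained $b_{\footprint+1}$, and $\att$ never inserted it) but $\hit$ on $s^\ast$, so $K(p)$ splits into two while every other branch of $\att'$ coincides with $\att$; the old $\maxleak(\footprint)$ blocks are preserved and one is added, yielding $\maxleak(\footprint+1)\ge\maxleak(\footprint)+1$.

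The hard part will be establishing the existence of such a witness $s^\ast$: a reachable footprint-$(\footprint+1)$ state that mimics an old state under the \emph{given} optimal probe $p$ yet keeps $b_{\footprint+1}$ alive until the appended query. The difficulty is that an optimal $\att$ may probe aggressively, so one must choose the refined branch $p$ and the placement of $b_{\footprint+1}$ jointly, ensuring that $p$'s minimal-length inputs neither reveal nor evict $b_{\footprint+1}$. This is precisely where the PLRU tree representation and the ``hole'' phenomenon of Example~\ref{ex:PLRU_states} enter: since $\footprint\ge\assoc$ the cache can be fully filled, and PLRU's flipping of arrows (Equation~\eqref{eq:plru_permutation}) lets one lodge $b_{\footprint+1}$ in a leaf of a subtree into which $p$'s accesses never descend, so that $b_{\footprint+1}$ survives all of $p$ while all observations still match $s$. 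A clean way to organize this is to fix an optimal $\att$ whose final distinctions probe within one subtree, leaving a sibling subtree free to host $b_{\footprint+1}$, and to verify coherence and survival using Lemmas~\ref{lem:repetition}, \ref{lem:deterministic}, and~\ref{cor:deterministic_depleted}. Notably, such a protected, survivable placement is impossible for \LRU\ and \FIFO, which always fill consecutively and never leave holes; this is exactly why the leakage bound is footprint-dependent only for PLRU.
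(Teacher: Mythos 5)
Your skeleton---non-strict monotonicity plus gaining the extra knowledge set by appending a single probe of a fresh block $b_{\footprint+1}$ to one branch---is a genuinely different mechanism from the paper's: there, the strategy is \emph{not} extended at all; instead an explicit ``protect-and-probe'' strategy is constructed whose all-miss branch is unrealizable (empty) for footprint $\footprint$ and becomes realizable once an $(\footprint+1)$-st block exists. However, your proof has a genuine gap exactly where you locate the ``hard part'': the existence of the witness $s^\ast$. You need, for an \emph{arbitrary} optimal strategy $\att$, some depleted probe $p$ and a victim-reachable state $s^\ast$ that is coherent with $p$ and in which $b_{\footprint+1}$ survives every eviction along $p$. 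Nothing in your argument rules out that \emph{every} depleted probe of the given optimal strategy accesses or evicts from every cache line---which is precisely what the optimal LRU and FIFO strategies behind Proposition~\ref{pro:extraction_LRU} do---in which case no witness exists for that strategy. Your fallback, ``fix an optimal $\att$ whose final distinctions probe within one subtree, leaving a sibling subtree free,'' simply assumes the structural property of optimal strategies that has to be proven: it is not a consequence of optimality (note that in PLRU consecutive misses alternate between the two root subtrees, so any branch with two or more misses descends into both), and you give no construction showing a strategy with this property attains $\maxleak(\footprint)$.

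The tell-tale sign is that your sketch makes no essential use of the hypothesis $\assoc\geq 4$, yet the proposition is false for $\assoc=2$, where PLRU coincides with LRU and the leakage is bounded by $2^\assoc$ independently of the footprint; any correct proof must therefore exploit something available only when $\assoc\geq 4$. In the paper this is the concrete protection mechanism: after $\assoc-1$ misses leave a single unknown block $b'$ at age $\assoc-1$, hitting the block at age $1$ moves $b'$ to age $\assoc-2$, since $\Pi^{\mathit{PLRU}}_{1}(\assoc-1)=2\cdot\Pi^{\mathit{PLRU}}_{0}(\assoc/2-1)=\assoc-2$ by \eqref{eq:plru_permutation}, which requires $\assoc-1>1$. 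This trick lets the attacker keep probing fresh candidate blocks without ever evicting the unknown block, so the resulting strategy both attains $\maxleak(\footprint)$ and exhausts all $\footprint$ victim blocks, leaving the all-miss branch empty---and realizable exactly when one more block is available. To close your gap you would have to supply an analogous concrete construction (or prove your structural claim about optimal strategies), at which point you would essentially have reconstructed the paper's argument.
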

\begin{Proof}
Let $\att$ be an attack strategy that obtains $\maxleak(\footprint)$ given a set of possible states. We are going to prove that one empty knowledge set from $\partition_\att$ when using $\footprint$ blocks is non-empty given an extra memory block i.e., when using $\footprint+1$ blocks.

Consider a probe where the first $\assoc-1$ observations are misses and the corresponding final knowledge set $\keys_1$; all the states in this set have the same blocks mapped to the younger ages and an unknown one in age $\assoc-1$ that we call $b'$. Consider the set of yet unused victim's blocks $ B'_v\subset B_v$, $b'\in B'_v$; any input from that set evicts $b'$ in the case of a miss. But before we input from $ B'_v$ we input one previously-used block and update the states so that $b'$ is mapped to age $\assoc-2$. We use the block in age 1 to modify the age of $b'$ since $\Pi^{\mathit{PLRU}}_{1}(\assoc-1)=2\cdot\Pi^{\mathit{PLRU}}_{0}(\assoc/2-1)=2\cdot(\assoc/2-1)=\assoc-2$, see \eqref{eq:plru_permutation}. In order to do this, the associativity must be at least $4$ so that $\assoc-1>1$.

After this, a new input from $ B'_v$ does not evict $b'$ in the case of a miss. Therefore, the input partitions $\keys_1$ into a knowledge set with $b'$ and another knowledge set without $b'$ but an unknown block in age $\assoc-1$. Repeating this process of placing the block from age $\assoc-1$ in age $\assoc-2$ and later inputting a new block allows to partition the knowledge set that returned a miss, without evicting an unknown block in the case of a miss, and therefore maximize the number of knowledge sets.

There is an attack strategy $\att$ that follows this process in order to obtain $\maxleak(\footprint)$. Then the knowledge set of the probe where all the observations are misses is empty. Now suppose that we have an extra memory block. Then, the knowledge set of the probe where all the observations are misses is not empty as it contains states with the extra block in age $\assoc-1$. This way we have increased the maximum information leakage by one. 
\end{Proof}

In the case of associativity four for $M_\PLRU$ the maximum information leakage is increased by eight with every new memory block, this can be seen in Figures \ref{fig:plru_filled} and \ref{fig:plru_empty}. This result also implies that the maximum information leakage for PLRU is unbounded.

\begin{proposition}
\label{pro:extraction_LRU_disj}
Consider $M_\FIFO$ and $M_\LRU$ with associativity $\assoc$, and a disjoint memory attacker. The maximum information leakage on any set of states is bounded by $\assoc+1$.
\end{proposition}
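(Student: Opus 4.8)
The plan is to show that, against a disjoint-memory attacker, LRU and FIFO reveal at most a single integer, namely \emph{how many} of the victim's blocks currently occupy the cache set, a quantity ranging over the $\assoc+1$ values $\{0,1,\dots,\assoc\}$; this caps the number of knowledge sets any strategy can produce. First I would invoke the structural characterization of LRU and FIFO already recorded for these policies (the paragraph following Example~\ref{ex:PLRU_states}): starting from a cache holding the attacker's blocks $[x_0,\dots,x_{\assoc-1}]$, any sequence of accesses to victim blocks in $B_v$ leaves the set in the form $[\_,\dots,\_,x_0,\dots,x_{\assoc-1-k}]$, where the $k$ youngest lines hold victim blocks and the remaining lines hold the surviving attacker blocks \emph{in their original order}. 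The decisive use of disjointness ($B_v\cap B_a=\emptyset$) is that the victim never touches, refreshes, or reorders an attacker block; hence the surviving blocks $x_0,\dots,x_{\assoc-1-k}$ and their ages are completely determined by the single integer $k=\min(m,\assoc)$, where $m$ is the number of distinct victim blocks accessed. Every reachable victim state therefore falls into one of the $\assoc+1$ classes indexed by $k\in\{0,\dots,\assoc\}$.

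Next I would prove that two states $s,s'$ in the same class (same $k$) are indistinguishable to any strategy $\att$, by induction on the length of a probe. The invariant is that $s$ and $s'$ agree on the position of every block of $B_a$ (the surviving $x_i$ at equal ages, every other attacker block absent) and differ only in the \emph{identity} of the victim blocks occupying the $k$ youngest lines. Since $\att$ only ever inputs blocks from $B_a$, at each step the queried block $b$ has the same age in $s$ and in $s'$, so $\view_b$ returns the same observation; moreover $\up_b$ transforms the two states in lock-step, preserving the invariant: a miss loads $b$ and evicts attacker or victim content symmetrically from the old end, while a hit on some surviving $x_i$ moves it identically, and in both cases the victim blocks merely shift position without ever being distinguished. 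Consequently $s$ and $s'$ are coherent with exactly the same probes of $\att$, i.e. they lie in the same knowledge set.

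Finally, since same-$k$ states always share a knowledge set, every element of the induced partition $\partition_\att$ is a union of the $k$-classes, so $\partition_\att$ refines into at most $\assoc+1$ nonempty parts; as this holds for every $\att$, we get $\maxleak\le\assoc+1$. The argument transfers verbatim to $M_\FIFO$, which shares the suffix structure above because hits do not reorder blocks. I expect the main obstacle to be the bookkeeping in the inductive step: one must verify that \emph{no} disjoint probe can separate two same-$k$ states, which rests entirely on the fact that an attacker block is never equal to a victim block, so that the loading of a probe block evicts content from the old end in exactly the same way in both states and no asymmetry between the two victim-block configurations can ever surface through the observation function.
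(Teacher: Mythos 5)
Your proof is correct, but it reaches the bound by a different mechanism than the paper's own proof. The paper argues strategy-centrically: it fixes the canonical probe sequence $x_0, x_1, \ldots$ (the attacker's blocks in order of increasing initial age) and exploits as its key property that, under LRU and FIFO, an access never changes the ages of blocks \emph{older} than the accessed one; hence each probe $x_i$ splits off exactly one unrefinable knowledge set (the states in which $x_i$ misses) while leaving the remaining states undisturbed for the next probe, so that after $\assoc$ probes at most $\assoc+1$ knowledge sets arise. Your proof is class-centric: you prove an indistinguishability invariant --- any two reachable victim states with the same number $k$ of cached victim blocks agree on the position of every block of $B_a$, and every attacker input preserves this agreement and yields the same observation in both states --- so that every knowledge set of \emph{every} strategy is a union of the $\assoc+1$ classes indexed by $k$. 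What your route buys: the upper bound is established uniformly over all strategies by an explicit induction, which is precisely what the proposition asserts and which the paper treats somewhat informally (its unrefinability claims quantify over strategies only implicitly, and its non-destructive-probing lemma is never needed for a pure upper bound --- your lock-step argument handles evicting misses symmetrically instead). What the paper's route buys: by exhibiting a concrete strategy whose probes do not destroy the information still to be extracted, it simultaneously shows the bound is \emph{attained} whenever all $\assoc+1$ classes are nonempty, i.e.\ it yields achievability, which your argument does not address. Both proofs rest on the same structural fact about reachable states, $[\_,\dots,\_,x_0,\dots,x_{\assoc-1-k}]$, and both implicitly restrict ``any set of states'' to sets of states reachable by the victim from the attacker-filled initial state; this restriction is genuinely needed, since for arbitrary sets of cache states (e.g.\ ones differing in the relative order of the attacker's blocks) a disjoint-memory attacker can distinguish more than $\assoc+1$ states.
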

\begin{Proof}
We base the proof on the state of the cache before the victim accesses it. There are $\assoc$ attacker's blocks  $x_0,\ldots ,x_{\assoc-1}$ such that $c(x_i)=i$. 

Now we make use of the fact that for both, $M_\FIFO$ and $M_\LRU$, for any block $b'$, $c(b)=upd_{b'}(c)(b)$ if $c(b)>c(b')$, see \eqref{eq:fifo_permutation} and \eqref{eq:lru_permutation}. This means that, when the attacker inputs one block he does not modify the ages of older ones. This is a way to probe the cache without evicting any blocks.

Assume the attacker inputs $x_0$; he gets a hit or a miss and partitions $\keys_v$ accordingly. The knowledge set that returned a miss has states with zero attacker's blocks, all inputs with attacker's blocks return the same output so it can not be partitioned further. The other knowledge set has at least one attacker's block and, following the property stated above, the older blocks, i.e. $x_1, x_2,$ etc, have not been evicted. Now the attacker inputs $x_1$ and partitions the set into the states with exactly one attacker's block ($x_0$) and the ones with at least two.

Following this sequence $x_0,x_1,\ldots $, every input singles out one unrefinable knowledge set but does not affect future inputs. In the end the attacker produces up to $\assoc+1$ knowledge sets. 
\end{Proof}

\begin{proposition}
\label{pro:extraction_PLRU_dis}
Consider $M_\PLRU$ with associativity $\assoc$, footprint $\footprint$, and a disjoint memory attacker. The maximum information leakage is bounded by \linebreak$\sum_{k=0}^{\footprint} \Lambda_\PLRU(k,\assoc)$ where $\Lambda_\PLRU(k,\assoc)$ is defined as in \eqref{eq:Lambda}.
\end{proposition}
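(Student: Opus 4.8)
The plan is to show that a disjoint memory attacker can never separate two victim states that differ only by a permutation of the victim's blocks, so that the induced partition $\partition_\att$ is always coarser than the partition of $\keys_v$ into \emph{configurations}, whose number is bounded by $\sum_{k=0}^{\footprint}\Lambda_\PLRU(k,\assoc)$. Call $c_1,c_2\in\keys_v$ \emph{configuration-equivalent} if there is a permutation $\phi$ of $B_v$, extended by the identity on $B_a$, with $c_2=c_1\circ\phi^{-1}$; since such a $\phi$ merely relabels victim blocks and fixes every attacker block, its orbits are precisely the configurations obtained by anonymising the victim's blocks into placeholders, so that the orbits with $k$ placeholders are counted by $\Lambda_\PLRU(k,\assoc)$.

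The first and central step is to prove that configuration-equivalent states are coherent with exactly the same probes of any strategy $\att$. Because the attacker is disjoint, every input $b$ of a probe lies in $B_a$ and is therefore fixed by $\phi$, so the data-independence identity $\up_b(c)(b')=\up_{\phi(b)}(c\circ\phi^{-1})(\phi(b'))$ from the proof of Lemma~\ref{lem:dataindep} specialises to $\up_b(c\circ\phi^{-1})=\up_b(c)\circ\phi^{-1}$ and, likewise, $\view_b(c\circ\phi^{-1})=\view_b(c)$. Thus a single attacker input both preserves configuration-equivalence and yields the same observation. Then, for a probe $p=b_1o_1\cdots b_no_n$ with all $b_i\in B_a$, an induction on $i$ gives $\up_{b_{i-1}}\cdots\up_{b_1}(c_2)=\big(\up_{b_{i-1}}\cdots\up_{b_1}(c_1)\big)\circ\phi^{-1}$ and hence $\view_{b_i}\up_{b_{i-1}}\cdots\up_{b_1}(c_1)=\view_{b_i}\up_{b_{i-1}}\cdots\up_{b_1}(c_2)$ for every $i$. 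Consequently $c_1$ is coherent with $p$ iff $c_2$ is, so $c_1\in K(p)\Leftrightarrow c_2\in K(p)$ for every probe, and every knowledge set is a union of configuration-equivalence classes.

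It follows that the number of knowledge sets produced by any strategy, and in particular $\maxleak$, is at most the number of configuration classes represented in $\keys_v$, i.e., the number of reachable configurations. Every reachable victim state is also reachable from an empty cache (as noted for partially filled states in Section~\ref{sec:caches_intake}), so its configuration is one of those counted in the empty-cache analysis; by definition $\Lambda_\PLRU(k,\assoc)$ is the number of such reachable configurations with exactly $k$ placeholders. Since a footprint-$\footprint$ state contains between $0$ and $\min\{\footprint,\assoc\}$ placeholders, and $\Lambda_\PLRU(k,\assoc)=0$ for $k>\assoc$, summing over $k=0,\dots,\footprint$ accounts for all of them, yielding $\maxleak\le\sum_{k=0}^{\footprint}\Lambda_\PLRU(k,\assoc)$.

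I expect the delicate part to be the indistinguishability step rather than the counting: one must check that the commutation with $\phi$ holds at \emph{every} stage of the probe and, crucially, that $\phi$ can be chosen to fix all of $B_a$, which is exactly what disjointness ($B_v\cap B_a=\emptyset$) guarantees (in the shared-memory setting this would fail, and indeed the bounds there are different). As a sanity check, the argument is policy-independent, and when specialised to LRU and FIFO---where $\Lambda_M(k,\assoc)=1$ for every $k$---it reproduces the bound $\assoc+1$ of Proposition~\ref{pro:extraction_LRU_disj}.
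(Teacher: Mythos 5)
Your proof is correct and takes essentially the same route as the paper's: bound the number of knowledge sets any disjoint-memory attacker can produce by the number of reachable \emph{configurations} (victim blocks anonymized to placeholders), then count these as $\sum_{k}\Lambda_\PLRU(k,\assoc)$. In fact your write-up is more complete than the paper's, which only asserts ``intuitively'' that such an attacker can distinguish at most configurations, whereas you derive this indistinguishability rigorously from the data-independence identity of Lemma~\ref{lem:dataindep}, instantiated with a permutation of $B_v$ that fixes $B_a$ pointwise --- exactly the step where disjointness is needed.
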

\begin{Proof}
The information extraction is intuitively bounded by the number of configurations of attacker's and victim's blocks that a disjoint-memory attacker can distinguish. For each value of $k\in[0,\footprint]$, $\Lambda_{M_\PLRU}(k,\assoc)$ gives the number of possible configurations using $k$ victim's blocks, leaving $\assoc-k$ attacker's blocks. Therefore, by summing all of them up to the used footprint, we obtain the total number of configurations. Note that this bound may not be tight.
\end{Proof}

\section{An Algorithm for Information Extraction}
\label{sec:algorithm}

In this section we present an algorithm for computing the maximum
information leakage $\maxleak$ for a given Mealy machine. The
algorithm complements
Propositions~\ref{pro:extraction_LRU} to \ref{pro:extraction_PLRU_dis} in
that it can deliver $\maxleak$ for a specific set of states
$\keys_v\subseteq \keys$ and an arbitrary Mealy machine. 
We use it later to compute extraction
w.r.t. a given memory footprint, and to replace the engine for counting
cache states in the CacheAudit static analyzer, leading to tighter
bounds on the leakage.

In principle, our algorithm enumerates all attack strategies $\att$
and computes their partitions $\partition_\att$ by grouping states in
$\keys_v$ according to the corresponding
observations. Additionally, we use two techniques for improving
efficiency and ensuring termination: 
\begin{asparaitem}
\item First, instead of maintaining the knowledge sets $K(p)$, for
  every probe $p$, we maintain the final knowledge set $FK(p)$. Using
  the final knowledge set enables us to track the number of original
  knowledge sets, as required for computing leakage. At the same time
  it enables re-use of the computation leading to $FK(p)$ across
  different strategies.
\item Second, we need to identify cycles when refining partitions in
  order to ensure termination. We say that a probe $q$ is {\em
    redundant} w.r.t another probe $p$, if $FK(pq)=FK(p)$. That is,
  the probe $q$ does not further refine the (final) knowledge set of
  $p$. The probe $q$ represents a cycle, which we detect by keeping
  track of already visited final knowledge sets.
\end{asparaitem}
The pseudocode is given in Algorithm~\ref{algorithm}. We next argue its correctness.

\begin{algorithm}[h!]
\SetKwFunction{Part}{Partition}
\Part{$\keys,\view,\up,\queries^a,\mathcal{S}$}{
  \KwData{set of possible states $\keys$ (initially $\keys=\keys_v$), observation function $\view$, transition function $\up$, set of attacker's inputs $\queries^a$, flag sets $\mathcal{S}$ (initially $\mathcal{S}=\emptyset$).}
  \KwResult{number of knowledge sets $\maxleak$ in the partition.}
  \Begin{
  \tcp{Look for redundant sequences}
  \If{$\keys\in\mathcal{S}$\nllabel{ln:redundant}}{\Return 1\;}
  $\maxleak=1$\;
  \ForEach{$\queriesval\in\queries^a$}{\nllabel{ln:queries}
    \tcp{If the leakage is equal to the size of the set, finish}
	\If{$\maxleak=|\keys|$\nllabel{ln:stop}}{\Return $\block_{max}$\;}
    \tcp{If the partition is not refined save the set}
    \uIf{$|\view_\queriesval(\keys)|=1$\nllabel{ln:flag}}
    {$\mathcal{S}'=\mathcal{S}\cup\{\keys\}$\;}
	\tcp{If the partition is refined erase the saved sets}
	\Else{$\mathcal{S}'=\emptyset$\;}
	\ForEach{$\obsval_i\in \view_\queriesval(\keys)$\nllabel{ln:obs}}{
	  $\keys_i=\{\keysval\in\keys \mid \view_\queriesval(\keysval)=\obsval_i\}$\tcp*{partition}\nllabel{ln:part}
	  $\keys_i'=\up_\queriesval(\keys_i)$\tcp*{update}\nllabel{ln:up}
	  $\block_i$ = \Part{$\keys_i',\view,\up,\queries^a,\mathcal{S}'$}\tcp*{recursion}\nllabel{ln:recurs}
	}
	\tcp{Increase the number of produced knowledge sets}
	$\maxleak=\max(\maxleak,\sum_i\block_i$)\;\nllabel{ln:max}
  }
  \Return $\block_{max}$\;\nllabel{ln:optimal}
}
}
\caption{Partition function.}
\label{algorithm}
\end{algorithm}

\begin{proposition}
Given a Mealy machine $M=(\keys,\queries,\obs,\up,\view)$, Algorithm~\ref{algorithm} terminates and finds the maximum information leakage $\maxleak$ for a set of possible states $\keys_v$.
\end{proposition}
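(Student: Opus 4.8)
The plan is to exhibit a recursive characterization of $\maxleak$ and then prove, by induction over the recursion tree, that \texttt{Partition} returns it; termination is handled separately. For a final knowledge set $\keys=FK(p)$ reachable by a probe $p$, let $L(\keys)$ denote the maximum number of knowledge sets into which an optimal continuation of the probing (starting from the states in $\keys$) can split $K(p)$. Since every further input either leaves the partition unchanged or strictly refines it, and the number of knowledge sets never exceeds $|\keys_v|$, this supremum is attained and finite, and it satisfies the fixed-point equation
\begin{equation*}
L(\keys)=\max\!\Bigl(1,\ \max_{\queriesval\in\queries_a}\ \sum_{\obsval\in\view_\queriesval(\keys)}L\bigl(\up_\queriesval(\keys^{\obsval})\bigr)\Bigr),\qquad \keys^{\obsval}=\{\,s\in\keys\mid\view_\queriesval(s)=\obsval\,\},
\end{equation*}
where the leading $1$ is the option of stopping. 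By Definition~\ref{def:optstrat} the quantity to be output is $\maxleak=L(\keys_v)$, produced by the initial call $\texttt{Partition}(\keys_v,\dots,\emptyset)$. The loop over $\queriesval$ together with the maximum on line~\ref{ln:max} realizes the outer maximum, the loop on line~\ref{ln:obs} with the sum realizes the inner sum, and lines~\ref{ln:part}--\ref{ln:up} compute $\up_\queriesval(\keys^{\obsval})$; so the recursion mirrors the equation for $L$ exactly, apart from the two pruning rules on lines~\ref{ln:stop} and~\ref{ln:redundant} treated below.

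For termination, observe that the argument $\keys$ can never grow in cardinality: a refining input ($|\view_\queriesval(\keys)|>1$) splits $\keys$ into proper subsets, and since $\up_\queriesval$ is a function we get $|\up_\queriesval(\keys^{\obsval})|\le|\keys^{\obsval}|<|\keys|$, while a non-refining input can at most preserve the size. The flag set $\mathcal{S}$ is reset to $\emptyset$ exactly when a refining input is taken, and the test on line~\ref{ln:redundant} guarantees that along any maximal segment of the recursion in which $\mathcal{S}$ is never reset, all visited sets are pairwise distinct; hence each such segment has length at most $2^{|\keys_v|}$. A strict cardinality decrease accompanies every reset, so at most $|\keys_v|-1$ resets occur on any root-to-leaf path. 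As both the number of inputs and the number of observations are finite, the recursion tree is finitely branching and of bounded depth, hence finite.

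For correctness I would first \emph{normalize} strategies. Call a strategy \emph{non-redundant} if, along every observation branch, it never reaches the same final knowledge set twice without an intervening refinement. Using that $\up$ acts deterministically on sets, any wasteful non-refining detour returning to an already-seen final knowledge set can be excised without changing the induced partition; hence every strategy is equivalent to a non-redundant one with the same number of knowledge sets, and $\maxleak$ is attained by a non-redundant strategy. I would then show by induction along the finite recursion tree that \texttt{Partition} faithfully evaluates every non-redundant strategy: since $\mathcal{S}$ holds exactly the final knowledge sets seen since the last refinement, the test on line~\ref{ln:redundant} fires only on redundant continuations, never on a non-redundant one, and each refining branch resets $\mathcal{S}$ and is explored afresh. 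Three points remain. Tracking $FK(p)$ instead of $K(p)$ does not distort the count, since two initial states that coincide in $FK(p)$ agree on all future observations and thus stay in one knowledge set forever, while distinct observation paths yield disjoint, non-empty knowledge sets, each contributing the $1$ returned at a leaf. The early stop on line~\ref{ln:stop} is sound because indistinguishable final states can never be separated, so each resulting knowledge set is witnessed by a distinct element of $\keys$ and $L(\keys)\le|\keys|$. Finally, the redundancy return of $1$ caps only redundant branches, each dominated by its matching non-redundant strategy, so the outer maxima are unaffected.

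The main obstacle is precisely the completeness of the redundancy pruning: I must ensure that capping every revisiting branch at $1$ never suppresses a strategy that could still refine. This reduces to the normalization claim above, and its proof rests entirely on the determinism of $\up$ on sets, i.e. on $\up_{b}(\cache)=\{\up_b(c)\mid c\in\cache\}$. This property lets me shorten any non-refining loop that returns to the same final knowledge set, so that the maximal number of knowledge sets is always realised by a repetition-free exploration, which is exactly what the depth-first search carries out before any redundancy test can fire. Establishing this equivalence carefully, while keeping track of how the flag set $\mathcal{S}$ grows within a segment and is reset across refinements, is the delicate part of the argument.
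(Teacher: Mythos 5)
Your proof is correct and takes essentially the same route as the paper's: a recursive exploration whose max-over-inputs / sum-over-observations structure mirrors adaptive strategies, soundness of the redundancy pruning via the observation that a non-refining loop returning to an already-visited final knowledge set can be excised without changing the achievable partition, and termination from the finiteness of the segments between refinements together with the strict cardinality decrease at each refinement. Your write-up is in fact more explicit than the paper's brief argument (the fixed-point characterization of $L$, the normalization to non-redundant strategies, and the justification for tracking $FK(p)$ rather than $K(p)$ are all left implicit there); the only slip is the bound $2^{|\keys_v|}$ on segment length---final knowledge sets need not be subsets of $\keys_v$, so the correct bound is the number of subsets of the full state set $\keys$---but this is immaterial since only finiteness is needed.
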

\begin{Proof}
  The algorithm recursively studies all the possible sequences of
  inputs. In every call, it cycles through all the inputs
  (line~\ref{ln:queries}) and for each output of the observation
  function (line~\ref{ln:obs}), partitions the set into final
  knowledge sets (line~\ref{ln:part}), updates every final knowledge
  set (line~\ref{ln:up}) and recursively calls again the function
  (line~\ref{ln:recurs}). Then, each execution obtains from the
  following calls the best way to partition the final knowledge set
  for each input (line~\ref{ln:recurs}), chooses the one with the
  maximum value (line~\ref{ln:max}), and returns it to the previous
  level (line~\ref{ln:optimal}).  Our algorithm terminates if the number of knowledge sets
  is equal to the size of the set of possible states
  (line~\ref{ln:stop}), at which point the knowledge sets cannot be
  further refined.

The flag sets $\mathcal{S}$ are used to keep track of the redundant
sequences. Whenever an input does not partition the set (i.e. there is
only one observation), it is saved (line~\ref{ln:flag}). If a later
call sees that the updated set $\keys$ is equal to one saved in
$\mathcal{S}$ then that call has produced a redundant sequence
and so it is stopped with one final knowledge set
(line~\ref{ln:redundant}). This procedure also guarantees that the
algorithm terminates. If all the inputs that only produce one
observation result in a redundant sequence, the algorithm is forced
to choose one that partitions the set and eventually depletes the
probes. Once a probe is depleted, the algorithm does not extend it 
anymore since every extension of a depleted probe is redundant.
\end{Proof}

\section{Experimental Results}\label{sec:experiments}

\subsection{Extraction (Program-independent)}

We use two alternative approaches for the program-independent
evaluation of extraction properties cache replacement policies. The
first is to rely on the upper bounds of
Propositions~\ref{pro:extraction_LRU} to
\ref{pro:extraction_PLRU_dis}. The second is to apply the algorithm
presented in Section~\ref{sec:algorithm} to a set of states that
represent the absorbed information for a given footprint. We determine
that set for each cache replacement policy by a simple fixpoint
computation. This algorithmic approach is more precise because it
takes the absorbed information as a baseline, but it comes at the
expense of higher computational cost.

We obtain the following results by using Algorithm~\ref{algorithm}, where we consider a single 4-way
cache set. Figure~\ref{fig:extraction} depicts our data. We highlight
the following results:
\begin{asparaitem}
\item For shared-memory adversaries, FIFO and LRU reach the bound on
  the maximum information leakage given in
  Proposition~\ref{pro:extraction_LRU}, which is independent of the
  footprint, see Figures~\ref{fig:fifo_filled} to
  \ref{fig:lru_empty}. In contrast, with PLRU the number of knowledge sets increases
  with the footprint as predicted by
  Proposition~\ref{pro:extraction_PLRU}, see Figures~\ref{fig:plru_filled} to \ref{fig:plru_empty}.
\item For disjoint-memory adversaries and a filled initial state we
  always obtain zero leakage. For PLRU and a footprint of 2 or 3 some cache lines remain unoccupied. As before,
  these unoccupied lines trigger additional observations, which
  explain the bump in Figure~\ref{fig:plru_filled}.
\item We observe that FIFO exhibits the smallest difference between
  absorption and extraction among all policies, i.e. once absorbed, it
  is comparably easy to extract information from the cache, see Figures~\ref{fig:fifo_filled} to \ref{fig:fifo_empty}. This is
  because FIFO does not reorder blocks upon hits, which makes systematic search
  for the cache state easier.
\end{asparaitem}

\subsection{Extraction (Program-dependent)}

We now use Algorithm~\ref{algorithm} for computing the information
that can be extracted from the cache state w.r.t. a specific
program. For this, we use as a basis the set $\keys_v$ of states
output by the CacheAudit static analyzer, when run on an
implementation of AES 256. In this example we use a cache consisting
of several independent cache sets of associativity 4, blocks of 64
bytes and overall sizes of 4, 8, and 16 KB. We consider two cases, one
that starts from a filled cache and one that starts from an empty cache.
 
The full results are given in Figure~\ref{fig:aes}; here we highlight
the following results.

\begin{asparaitem}
\item We obtain the bounds on the absorbed information corresponds to
  using the CacheAudit static analyzer. The difference between the
  absorbed information and the extractable information corresponds to
  the precision gained by the development in this paper. This gain is
  generally higher when sets contain more blocks, and reaches up to 50
  bits for LRU on a 4K cache with empty initial state and a shared memory attacker, see
  Figure~\ref{fig:lru_aes_empty}. That is, our extraction algorithm is
  a simple but powerful replacement for the model counting algorithms
  in CacheAudit.
\item The figures show a change in slope at different points. This is
  due to the fact that the leakage about the full cache state is
  computed as the product of the leakages about the individual
  sets. When increasing the cache size for a fixed program, the
  footprint in each of the sets reduces. The combined effect of
  considering more sets with smaller footprint each accounts for the
  change in slope.
\end{asparaitem}

\pgfplotsset{every axis legend/.append style={font=\footnotesize }}
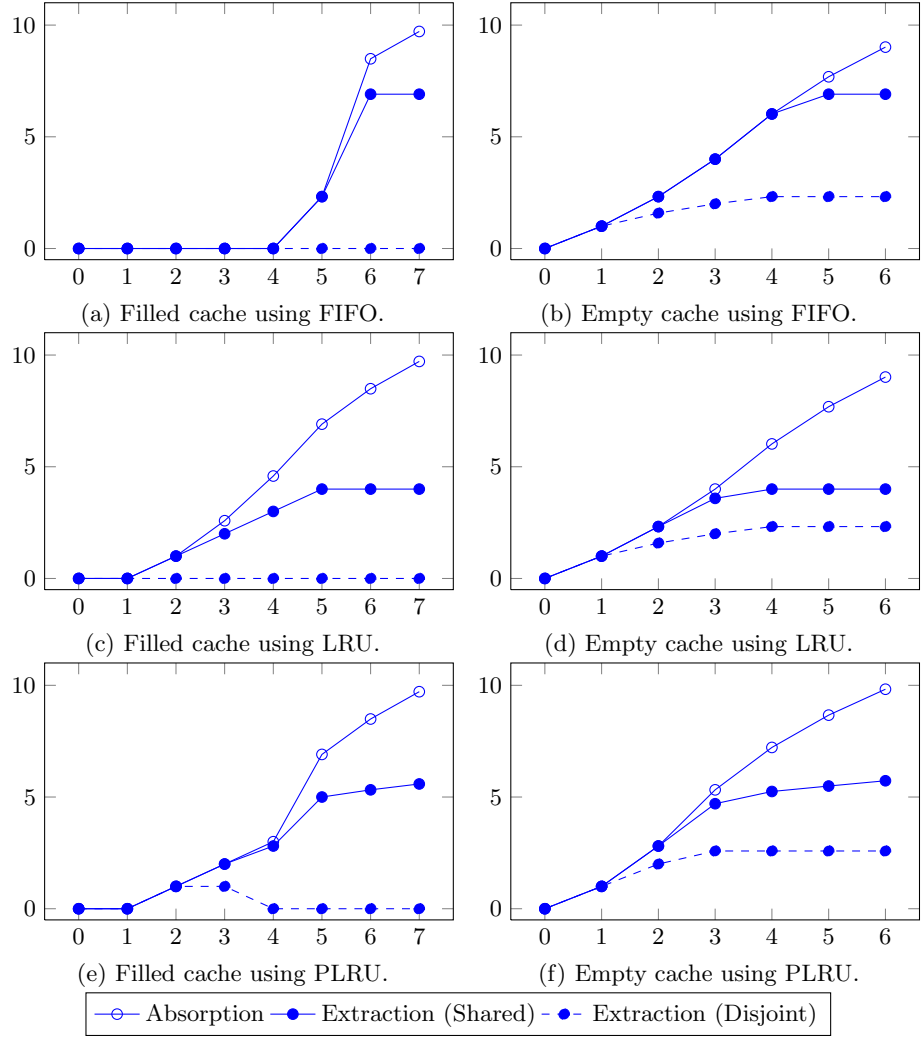
\begin{figure*}
\centering
\begin{tabular}{cc}
\begin{tikzpicture}
\begin{axis}[ymin=-0.5, ymax=11,
width=0.575\textwidth,height=5cm,
xtick={0,1,2,3,4,5,6,7},
legend columns=3,
legend entries={Absorption,Extraction (Shared),Extraction (Disjoint)},
legend to name=extr]
\AddCurve{Figures/abs_filled_fifo.txt}{blue}{o}{solid}
\AddCurve{Figures/ext_filled_fifo_s.txt}{blue}{*}{solid}
\AddCurve{Figures/ext_filled_fifo_d.txt}{blue}{*}{dashed}
\end{axis}
\end{tikzpicture}
\customlabel{fig:fifo_filled}{\ref*{fig:extraction}a}&
\begin{tikzpicture}
\begin{axis}[ymin=-0.5, ymax=11,
width=0.575\textwidth,height=5cm,
xtick={0,1,2,3,4,5,6,7}]
\AddCurve{Figures/abs_empty_fifo.txt}{blue}{o}{solid} 
\AddCurve{Figures/ext_empty_fifo_s.txt}{blue}{*}{solid}
\AddCurve{Figures/ext_empty_fifo_d.txt}{blue}{*}{dashed}
\end{axis}
\end{tikzpicture}
\customlabel{fig:fifo_empty}{\ref*{fig:extraction}b}\\

(a) Filled cache using FIFO.&(b) Empty cache using FIFO.
\end{tabular}

\begin{tabular}{cc}
\begin{tikzpicture}
\begin{axis}[ymin=-0.5, ymax=11,
width=0.575\textwidth,height=5cm,
xtick={0,1,2,3,4,5,6,7}]
\AddCurve{Figures/abs_filled_lru.txt}{blue}{o}{solid}
\AddCurve{Figures/ext_filled_lru_s.txt}{blue}{*}{solid}
\AddCurve{Figures/ext_filled_lru_d.txt}{blue}{*}{dashed}
\end{axis}
\end{tikzpicture}
\customlabel{fig:lru_filled}{\ref*{fig:extraction}c}&
\begin{tikzpicture}
\begin{axis}[ymin=-0.5, ymax=11,
width=0.575\textwidth,height=5cm,
xtick={0,1,2,3,4,5,6,7}]
\AddCurve{Figures/abs_empty_lru.txt}{blue}{o}{solid}
\AddCurve{Figures/ext_empty_lru_s.txt}{blue}{*}{solid}
\AddCurve{Figures/ext_empty_lru_d.txt}{blue}{*}{dashed}
\end{axis}
\end{tikzpicture}
\customlabel{fig:lru_empty}{\ref*{fig:extraction}d}\\

(c) Filled cache using LRU.&(d) Empty cache using LRU.
\end{tabular}

\begin{tabular}{cc}
\begin{tikzpicture}
\begin{axis}[ymin=-0.5, ymax=11,
width=0.575\textwidth,height=5cm,
xtick={0,1,2,3,4,5,6,7}]
\AddCurve{Figures/abs_filled_plru.txt}{blue}{o}{solid}
\AddCurve{Figures/ext_filled_plru_s.txt}{blue}{*}{solid}
\AddCurve{Figures/ext_filled_plru_d.txt}{blue}{*}{dashed}
\end{axis}
\end{tikzpicture}
\customlabel{fig:plru_filled}{\ref*{fig:extraction}e}&
\begin{tikzpicture}
\begin{axis}[ymin=-0.5, ymax=11,
width=0.575\textwidth,height=5cm,
xtick={0,1,2,3,4,5,6,7}]
\AddCurve{Figures/abs_empty_plru.txt}{blue}{o}{solid} 
\AddCurve{Figures/ext_empty_plru_s.txt}{blue}{*}{solid}
\AddCurve{Figures/ext_empty_plru_d.txt}{blue}{*}{dashed}
\end{axis}
\end{tikzpicture}
\customlabel{fig:plru_empty}{\ref*{fig:extraction}f}\\

(e) Filled cache using PLRU.&(f) Empty cache using PLRU.
\end{tabular}

\ref*{extr}
\caption{Information extraction of different replacement policies on a
  4-way cache set. Figures~\ref{fig:fifo_filled}, \ref{fig:lru_filled} and \ref{fig:plru_filled} depict the case of a
  filled initial cache, \ref{fig:fifo_empty}, \ref{fig:lru_empty} and \ref{fig:plru_empty} an empty one.  In
  all figures, the horizontal axis depicts the footprint, i.e., the
  number of memory blocks used. The vertical axis depicts the
  extracted information on a logarithmic scale, that is, in {\em
    bits}. The results for shared memory adversaries use the solid
  line; disjoint memory case uses the dashed line.
}
\label{fig:extraction}
\end{figure*}

%

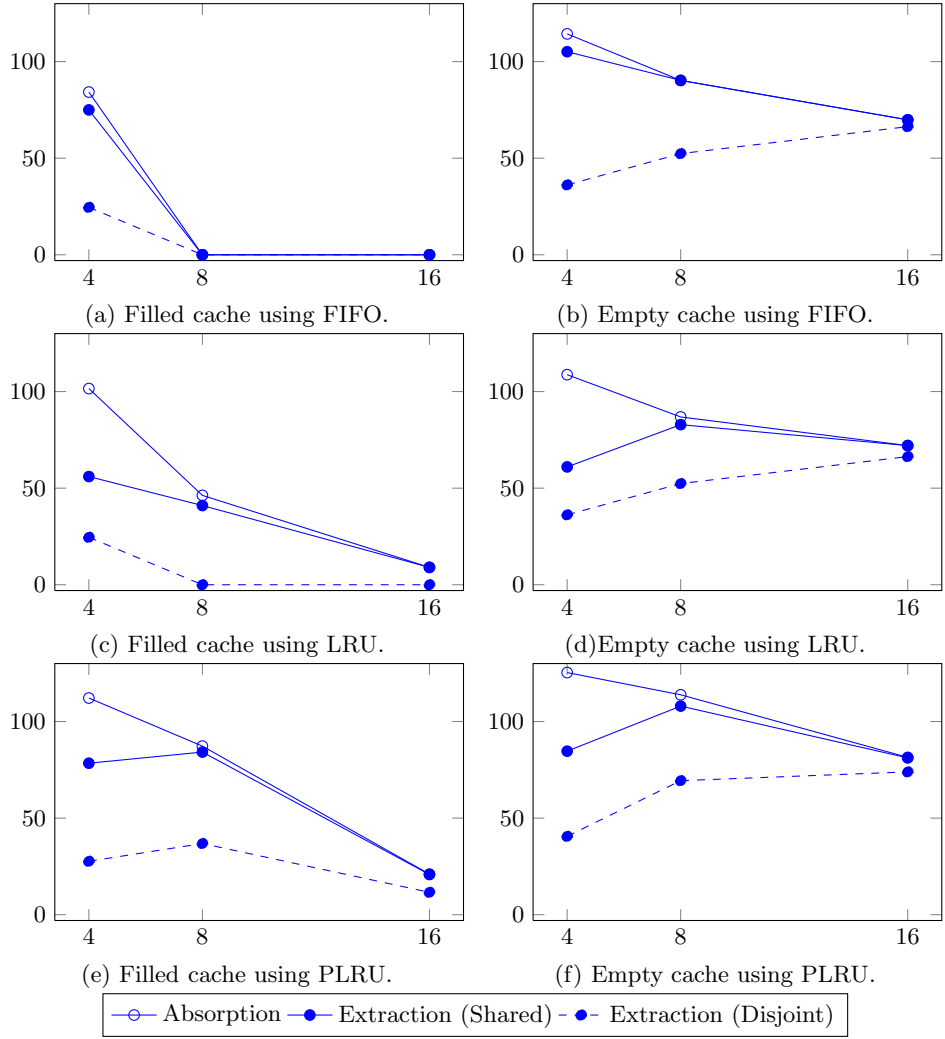
\begin{figure*}
\centering
\hspace*{-1em}
\begin{tabular}{cc}
\begin{tikzpicture}
\begin{axis}[ymin=-3,ymax=130,width=0.575\textwidth,height=5cm,xtick={4,8,16}]
\AddCurve{Cacheaudit/aes-256-preloading-fifo-abs-4.txt}{blue}{o}{solid}
\AddCurve{Cacheaudit/aes-256-preloading-fifo-extr-shared-4.txt}{blue}{*}{solid}
\AddCurve{Cacheaudit/aes-256-preloading-fifo-extr-dis-4.txt}{blue}{*}{dashed}
\end{axis}
\end{tikzpicture}
\customlabel{fig:fifo_aes_filled}{\ref*{fig:aes}a}&
\begin{tikzpicture}
\begin{axis}[ymin=-3,ymax=130,width=0.575\textwidth,height=5cm,xtick={4,8,16}]
\AddCurve{Cacheaudit/aes-256-rom-fifo-abs-4.txt}{blue}{o}{solid}
\AddCurve{Cacheaudit/aes-256-rom-fifo-extr-shared-4.txt}{blue}{*}{solid}
\AddCurve{Cacheaudit/aes-256-rom-fifo-extr-dis-4.txt}{blue}{*}{dashed}
\end{axis}
\end{tikzpicture}
\customlabel{fig:fifo_aes_empty}{\ref*{fig:aes}b}\\

(a) Filled cache using FIFO.&(b) Empty cache using FIFO.
\end{tabular}

\hspace*{-1em}
\begin{tabular}{cc}
\begin{tikzpicture}
\begin{axis}[ymin=-3,ymax=130,width=0.575\textwidth,height=5cm,xtick={4,8,16}]
\AddCurve{Cacheaudit/aes-256-preloading-lru-abs-4.txt}{blue}{o}{solid}
\AddCurve{Cacheaudit/aes-256-preloading-lru-extr-shared-4.txt}{blue}{*}{solid}
\AddCurve{Cacheaudit/aes-256-preloading-lru-extr-dis-4.txt}{blue}{*}{dashed}
\end{axis}
\end{tikzpicture}
\customlabel{fig:lru_aes_filled}{\ref*{fig:aes}c}&
\begin{tikzpicture}
\begin{axis}[ymin=-3,ymax=130,width=0.575\textwidth,height=5cm,xtick={4,8,16}]
\AddCurve{Cacheaudit/aes-256-rom-lru-abs-4.txt}{blue}{o}{solid}
\AddCurve{Cacheaudit/aes-256-rom-lru-extr-shared-4.txt}{blue}{*}{solid}
\AddCurve{Cacheaudit/aes-256-rom-lru-extr-dis-4.txt}{blue}{*}{dashed}
\end{axis}
\end{tikzpicture}
\customlabel{fig:lru_aes_empty}{\ref*{fig:aes}d}\\

(c) Filled cache using LRU.&(d)Empty cache using LRU.
\end{tabular}

\hspace*{-1em}
\begin{tabular}{cc}
\begin{tikzpicture}
\begin{axis}[ymin=-3,ymax=130,width=0.575\textwidth,height=5cm,xtick={4,8,16}]
\AddCurve{Cacheaudit/aes-256-preloading-plru-abs-4.txt}{blue}{o}{solid}
\AddCurve{Cacheaudit/aes-256-preloading-plru-extr-shared-4.txt}{blue}{*}{solid}
\AddCurve{Cacheaudit/aes-256-preloading-plru-extr-dis-4.txt}{blue}{*}{dashed}
\end{axis}
\end{tikzpicture}
\customlabel{fig:plru_aes_filled}{\ref*{fig:aes}e}&
\begin{tikzpicture}
\begin{axis}[ymin=-3,ymax=130,width=0.575\textwidth,height=5cm,xtick={4,8,16}]
\AddCurve{Cacheaudit/aes-256-rom-plru-abs-4.txt}{blue}{o}{solid}
\AddCurve{Cacheaudit/aes-256-rom-plru-extr-shared-4.txt}{blue}{*}{solid}
\AddCurve{Cacheaudit/aes-256-rom-plru-extr-dis-4.txt}{blue}{*}{dashed}
\end{axis}
\end{tikzpicture}
\customlabel{fig:plru_aes_empty}{\ref*{fig:aes}f}\\

(e) Filled cache using PLRU.&(f) Empty cache using PLRU.
\end{tabular}

\ref*{extr}
\caption{Information absorption and extraction (in bits) for the AES
  execution on a 4-way cache, for filled and empty initial cache states. Figures~\ref{fig:fifo_aes_filled}, \ref{fig:lru_aes_filled} and \ref{fig:plru_aes_filled} depict the case of a
  filled initial cache, \ref{fig:fifo_aes_empty}, \ref{fig:lru_aes_empty} and \ref{fig:plru_aes_empty} an empty one. The horizontal axis depicts the size of the cache in KB, the vertical axis depicts the extracted information in logarithmic scale.}
\label{fig:aes}
\end{figure*}

\section{Related Work}\label{sec:related}
Our work is related to existing models for adaptive probing~\cite{koepfbasin07,BorealeP12}. There, however, the secret remains static. The model of~\cite{koepfbasin07} and the deterministic part of~\cite{BorealeP12} is a special case of ours, where the update function is the identity. 

Mardziel et al.~\cite{mardziel14} develop an approach to quantify information flow for \emph{dynamic secrets}, that is, secrets that evolve over time. They consider a probabilistic system and attacks that consist of a fixed amount of steps. Attacks finish with an exploit whose success is evaluated using gain functions~\cite{alvim2012measuring}. Our model for information extraction differs from their model in that it is deterministic and allows to compute leakage for an undetermined number of attacks steps, i.e., until the probing is depleted. We further provide an algorithm that actually allows us to compute optimal strategies. We leave a probabilistic extension of our model to future work.

The problem that we consider in this paper is related to the {\em state identification} problem for Mealy machines, which was first introduced by Moore in \cite{Moore56}, expanded upon by Gill in \cite{Gill61}, and analyzed from a complexity perspective by Lee and Yannakakis~\cite{lee94}. The state-identification problem is to determine the initial state of a Mealy machine by probing strategies, just as in our case. While we are interested in the maximal number of knowledge sets into which the  uncertainty about the initial state can be partitioned, state-identification algorithms are only concerned with the decision problem, that is whether or not a full identification, i.e, a partitioning into singleton knowledge sets is feasible, and if it is, by which strategy.
So our problem of finding the finest partition can be seen as a quantitative generalization of the state-identification problem.

A proposal to quantify the security of cache memories was introduced in \cite{zhang14}. In this case, they use several types of attackers and study the security under different countermeasures, without considering the replacement policies individually. They obtained arguments in favor of some countermeasures against specific attacks. In our case we consider one single type of attacker, do not take into account any type of countermeasure and compare the different replacement policies.

\section{Future Work and Conclusions}

We presented a novel approach for quantifying isolation properties of shared caches, based on a simple model of adaptive attacks against Mealy machines.  We use our approach for performing the first security analysis of common cache replacement policies (LRU, FIFO, PLRU), as well as for improving the precision of the CacheAudit static analyzer.  Our prime target for future work is to investigate an extension of our model to Markov Decision Processes for dealing with randomized replacement policies.

\paragraph{Acknowledgments}
We thank Pierre Ganty and the anonymous reviewers for their
constructive feedback.

This work was supported by Microsoft Research through its PhD
Scholarship Programme, by Ram{\'o}n y Cajal grant RYC-2014-16766,
Spanish projects TIN2012-39391-C04-01 StrongSoft and
TIN2015-70713-R DEDETIS, and Madrid regional project S2013/ICE-2731
N-GREENS, and by the German
Research Council (DFG) as part of the Project PEP.

\bibliographystyle{abbrv}
\bibliography{leakmc}  

\begin{thebibliography}{10}

\bibitem{abel2013measurement}
A.~Abel and J.~Reineke.
\newblock Measurement-based modeling of the cache replacement policy.
\newblock In {\em RTAS}, pages 65--74. IEEE, 2013.

\bibitem{Aciicmez07Simp}
O.~Ac{\i}i{\c{c}}mez, {\c{C}}.~K. Ko{\c{c}}, and J.-P. Seifert.
\newblock On the power of simple branch prediction analysis.
\newblock In {\em ASIACCS}, pages 312--320. ACM, 2007.

\bibitem{Aciicmez07Diff}
O.~Ac{\i}i{\c{c}}mez, {\c{C}}.~K. Ko{\c{c}}, and J.-P. Seifert.
\newblock Predicting secret keys via branch prediction.
\newblock In {\em CT-RSA}, pages 225--242. Springer, 2007.

\bibitem{alvim2012measuring}
M.~S. Alvim, K.~Chatzikokolakis, C.~Palamidessi, and G.~Smith.
\newblock Measuring information leakage using generalized gain functions.
\newblock In {\em CSF}, pages 265--279. IEEE, 2012.

\bibitem{AskarovS07}
A.~Askarov and A.~Sabelfeld.
\newblock Gradual release: Unifying declassification, encryption and key
  release policies.
\newblock In {\em SSP}, pages 207--221. {IEEE}, 2007.

\bibitem{Bernstein05cache-timingattacks}
D.~Bernstein.
\newblock Cache-timing attacks on {AES}.
\newblock \url{http://cr.yp.to/antiforgery/cachetiming-20050414.pdf}, 2005.

\bibitem{BorealeP12}
M.~Boreale and F.~Pampaloni.
\newblock Quantitative multirun security under active adversaries.
\newblock In {\em QEST}. {IEEE}, 2012.

\bibitem{DenningWSM}
P.~J. Denning.
\newblock The working set model for program behavior.
\newblock {\em Commun. ACM}, 11(5):323--333, 1968.

\bibitem{doychev2015cacheaudit}
G.~Doychev, B.~K{\"o}pf, L.~Mauborgne, and J.~Reineke.
\newblock {CacheAudit}: a tool for the static analysis of cache side channels.
\newblock {\em ACM Transactions on Information and System Security (TISSEC)},
  18(1):4:1--4:32, 2015.

\bibitem{Gill61}
A.~Gill.
\newblock State-identification experiments in finite automata.
\newblock {\em Information and Control}, 4(2-3):132--154, 1961.

\bibitem{GullaschBK11}
D.~Gullasch, E.~Bangerter, and S.~Krenn.
\newblock Cache games - bringing access-based cache attacks on {AES} to
  practice.
\newblock In {\em SSP}, pages 490--505. IEEE, 2011.

\bibitem{koepfbasin07}
B.~K{\"o}pf and D.~Basin.
\newblock {An Information-Theoretic Model for Adaptive Side-Channel Attacks}.
\newblock In {\em CCS}, pages 286--296. ACM, 2007.

\bibitem{lampson1973note}
B.~W. Lampson.
\newblock A note on the confinement problem.
\newblock {\em Communications of the ACM}, 16(10):613--615, 1973.

\bibitem{lee94}
D.~Lee and M.~Yannakakis.
\newblock Testing finite-state machines: State identification and verification.
\newblock {\em {IEEE} Transactions on computers}, 43(3):306--320, 1994.

\bibitem{LiuYGHL15}
F.~Liu, Y.~Yarom, Q.~Ge, G.~Heiser, and R.~B. Lee.
\newblock Last-level cache side-channel attacks are practical.
\newblock In {\em SSP}, pages 605--622. {IEEE}, 2015.

\bibitem{mardziel14}
P.~Mardziel, M.~S. Alvim, M.~Hicks, and M.~R. Clarkson.
\newblock Quantifying information flow for dynamic secrets.
\newblock In {\em SSP}, pages 540--555. IEEE, 2014.

\bibitem{Moore56}
E.~F. Moore.
\newblock Gedanken-experiments on sequential machines.
\newblock {\em Automata studies}, 34:129--153, 1956.

\bibitem{NeveS06}
M.~Neve and J.~Seifert.
\newblock Advances on access-driven cache attacks on {AES}.
\newblock In {\em SAC}, pages 147--162. Springer, 2006.

\bibitem{osvikshamir06cache}
D.~A. Osvik, A.~Shamir, and E.~Tromer.
\newblock Cache attacks and countermeasures: the case of {AES}.
\newblock In {\em CT-RSA}, pages 1--20. Springer, 2006.

\bibitem{smith09}
G.~Smith.
\newblock On the foundations of quantitative information flow.
\newblock In {\em FoSSaCS}, pages 288--302. Springer, 2009.

\bibitem{TiwariOLVLHKCS11}
M.~Tiwari, J.~Oberg, X.~Li, J.~Valamehr, T.~E. Levin, B.~Hardekopf, R.~Kastner,
  F.~T. Chong, and T.~Sherwood.
\newblock Crafting a usable microkernel, processor, and {I/O} system with
  strict and provable information flow security.
\newblock In {\em ISCA}, pages 189--200. ACM, 2011.

\bibitem{xiang13hotl}
X.~Xiang, C.~Ding, H.~Luo, and B.~Bao.
\newblock {HOTL}: a higher order theory of locality.
\newblock In {\em ASPLOS}, pages 343--356. ACM, 2013.

\bibitem{YaromF14}
Y.~Yarom and K.~Falkner.
\newblock {FLUSH+RELOAD:} {A} high resolution, low noise, {L3} cache
  side-channel attack.
\newblock In {\em {USENIX}}, pages 719--732. USENIX Association, 2014.

\bibitem{zwsm15}
D.~Zhang, Y.~Wang, G.~E. Suh, and A.~C. Myers.
\newblock A hardware design language for timing-sensitive information-flow
  security.
\newblock In {\em ASPLOS}, pages 503--516. ACM, 2015.

\bibitem{zhang14}
T.~Zhang and R.~B. Lee.
\newblock New models of cache architectures characterizing information leakage
  from cache side channels.
\newblock In {\em ACSAC}, pages 96--105. ACM, 2014.

\end{thebibliography}
\end{document}